\newtheorem{theorem}{Theorem}
\newtheorem{lemma}[theorem]{Lemma}
\newtheorem{proposition}[theorem]{Proposition}
\theoremstyle{definition}
\newtheorem{remark}[theorem]{Remark}
\DeclareMathOperator{\DS}{DS}
\DeclareMathOperator{\ADS}{ADS}  
\newcommand{\F}{\mathbb{F}}
\newcommand{\ang}[1]{{\left\langle{#1}\right\rangle}}
\DeclareMathOperator{\Cay}{Cay}
\let\Mod\relax
\DeclareMathOperator{\Mod}{mod}
\renewcommand{\le}{\leqslant}
\renewcommand{\leq}{\leqslant}
\renewcommand{\angle}[1]{\langle #1 \rangle}
\begin{document}
\title{The Parallel Dynamic Complexity of the Abelian Cayley Group
  Membership Problem}
%
%
\author{V. Arvind\thanks{The Institute of Mathematical Sciences
    (HBNI), Chennai, India, and Chennai Mathematical Institute,
    Chennai, India \texttt{email: arvind@imsc.res.in}} \and Samir
  Datta\thanks{Chennai Mathematical Institute, Chennai, India
    \texttt{email: sdatta@cmi.ac.in}} \and Asif Khan\thanks{Chennai
    Mathematical Institute, Chennai, India\texttt{ email:
      asifkhan@cmi.ac.in}} \and Shivdutt Sharma\thanks{Indian
    Institute of Information Technology, Una, India \texttt{email:
      shiv.sharma@alumni.iitgn.ac.in}} \and Yadu Vasudev\thanks{Indian
    Institute of Technology Madras, Chennai, India \texttt{email:
      yadu@cse.iitm.ac.in}} \and Shankar Ram Vasudevan\thanks{Chennai
    Mathematical Institute, Chennai, India \texttt{email:
      shankarram@cmi.ac.in}}}

\maketitle

%
\begin{abstract}
  Let $G$ be a finite group given as input by its multiplication
  table. For a subset $S\subseteq G$ and an element $g\in G$ the
  \emph{Cayley Group Membership Problem} ($\CGM$) is to check if $g$
  belongs to the subgroup generated by $S$. While this problem is
  easily seen to be in polynomial time, pinpointing its parallel
  complexity has been of research interest over the years. Barrington
  et al \cite{BKLM} have shown that for abelian groups the $\CGM$
  problem can be solved in $O(\log\log |G|)$ parallel time.  In this
  paper we further explore the parallel complexity of the abelian
  $\CGM$ problem, with focus on the dynamic setting: the generating
  set $S$ changes with insertions and deletions and the goal is to
  maintain a data structure that supports efficient membership queries
  to the subgroup $\angle{S}$. We obtain the following results:
 \begin{enumerate}
 \item First, we consider the more general problem of Monoid
   Membership, where $G$ is a monoid input by its multiplication
   table. When $G$ is a \emph{commutative monoid} we show there is a
   deterministic dynamic $\AC^0$ algorithm\footnote{Equivalently, a
    constant time parallel algorithm using polynomially many
     processors.} for membership testing that supports $O(1)$
   insertions and deletions in each step.
 \item Building on the previous result we show that there 
   is a dynamic randomized $\AC^0$ 
   algorithm for abelian $\CGM$ that supports 
   $\polylog(|G|)$
   insertions/deletions to $S$ in each step.
 \item If the number of insertions/deletions is at most $O(\log n/\log\log n)$
   then we obtain a deterministic dynamic $\AC^0$ algorithm for abelian $\CGM$.
 \item Applying these algorithms we obtain analogous results for the
   dynamic abelian Group Isomorphism.
 \end{enumerate}
 We also consider the problem when the multiplication table for
 $G$ is dynamic.
\end{abstract}


\section{Introduction}

The main algorithmic problem of interest in this paper, is the
\emph{Cayley Group Membership Problem} ($\CGM$):~ Given as input a
finite group $G$ by its multiplication table (also known as its Cayley
table), a subset $S\subseteq G$ and an element $g\in G$, test if
$g\in\angle{S}$, where $\angle{S}$ is the subgroup of $G$ generated by
the elements in $S$.

The $\CGM$ problem was brought into focus by the work of Barrington et
al \cite{BKLM} which raises intriguing questions about its parallel
complexity.

\paragraph*{\textbf{Background}}
Membership testing in finite groups is well-studied
\cite{Seressbook}. Its computational complexity significantly depends
on how $G$ is given as input and on its elements' representation. For
example, if the elements of $G$ are represented as permutations on
$[n]=\{1,2,\ldots,n\}$, then $G$ is a subgroup of $S_n$, the group of
all permutations on $[n]$. A natural compact description of $G$ as
input is by a generating set, as every finite group $G$ has a
generating set of size at most $\log |G|$. In this form, membership
testing in \emph{permutation groups} has been studied since the
1970's, pioneered by the work of Sims \cite{Sims,Seressbook}. There
are efficient polynomial (in $n$) time algorithms for the problem as
well as parallel algorithms for it. The problem is in $\NC$: it can be
be solved in $\polylog(n)$ time with polynomially many processors
\cite{BLS}. On the other hand, $G=\angle{a}$ could be a cyclic
subgroup, generated by $a$, of $\F_p^*$, the multiplicative group of
the finite field $\F_p$, where the prime $p$ is given in
binary. Testing if $b\in\angle{a}$, for $b\in\F_p^*$ is considered
computationally hard. The search version of solving for $x$ such that
$a^x=b$ is the \emph{discrete log} problem, widely believed
intractable for random primes $p$.

\paragraph*{\textbf{Cayley Table Representation}}
The Cayley table representation of $G$, in contrast, makes the $\CGM$
problem algorithmically easy: we can define a graph $X=(V,E)$ with
$V=G$ as vertex set and $(x,y)\in E$ if $xs=y$ or $ys=x$ for a
generator $s\in S$. Then, $g$ is in the subgroup generated by $S$ if
and only if the vertex $g$ is reachable from the identity element of
$G$. Indeed, this is an instance of undirected graph reachability
which has a polynomial time and even a \emph{deterministic logspace}
algorithm due to Reingold \cite{Rei04}.  That is, $\CGM$ is in the
complexity class $\L$ (which is contained in $\P$). Since it is in
$\L$ it is also in the circuit complexity class $\AC(\log n)=\AC^1$,
which means it has $\log$-depth polynomial-size circuits of unbounded
fanin. Equivalently, this means $\CGM$ has a logarithmic time CRCW
PRAM algorithm (we will define the relevant parallel complexity
classes in Section~\ref{sec2}).

\paragraph*{\textbf{Parallel Complexity of $\CGM$ and
Group Isomorphism}}

Chattopadhyay, Tor\'an and Wagner \cite{CT12} have shown that the
Group Isomorphism problem of checking if two groups $G_1$ and $G_2$
given as input by their multiplication tables are isomorphic can be
solved by quasipolynomial size constant-depth circuits. While the
question whether or not Group Isomorphism is in $\P$ is open and is
intensely studied in recent times \cite{Sun,GQ21a,GQ21b}, the above
parallel complexity upper bound implies that even Parity is not
reducible to Group Isomorphism!  Similarly, Fleischer has observed,
based on \cite{CT12} that the $\CGM$ problem can also be solved by
quasipolynomial size constant depth circuits. Since there is no
hardness result for $\CGM$, pinpointing its parallel complexity is an
interesting question.

As already mentioned, Barrington et al \cite{BKLM} have made nice
progress showing that $\CGM$ for abelian groups is in $\AC(\log\log
n)$. Indeed, since the resulting circuits are dlogtime uniform, the
upper bound is $\FOLL$ (which means first-order definable with
$\log\log n$ quantifier depth, where $n$ is the size of the
group). Further, they also show that $\CGM$ for nilpotent groups is in
the class $\AC((\log\log n)^2)$ and $\CGM$ for solvable groups of
class $d$ are in $\AC(d\log\log n)$. The interesting questions in the
static setting is to improve these upper bounds and/or extend these
results to other classes of groups.

In this paper we study the \emph{dynamic parallel complexity} of
$\CGM$ for abelian groups. Before we describe our results, we give
some background.

\paragraph*{\textbf{Dynamic complexity}}

Dynamic algorithms, broadly, deals with the design of efficient
algorithms for problems when the input is modified with small
changes. The aim is to solve the problem, for the modified input,
significantly more efficiently than running the best known ``static''
algorithm from scratch. The measure of efficiency is crucial here and
defines the model of computation. Dynamic algorithms is a burgeoning
field of research ( see e.g. \cite{HHS} and \cite{ILMP,DDKPSS,NO})
with many applications that require handling large inputs subject to
small changes over time.

From a \emph{parallel complexity perspective}, we have the framework
of Patnaik and Immerman~\cite{PatnaikI} that is rooted in descriptive
complexity \cite{Immerman}. Closely related is the work of Dong, Su,
and Topor~\cite{DongST95}. Here the ideal solution is to obtain a
dynamic algorithm for the considered problem in \emph{constant
  parallel time}. Theoretically, constant parallel time is $O(1)$ time
on a CRCW PRAM model (where the CRCW model is the most liberal as it
allows for concurrent reads and writes). It is well-known that this
coincides with the complexity class $\AC^0$ (the class of problems
solvable by constant-depth boolean circuits). From a descriptive
complexity perspective, when the circuits are dlogtime uniform (more
details in Section~\ref{sec2}) this corresponds to $\FO$, the class of
problems expressible in first-order logic. The dynamic complexity
class $\DynFO$ \cite{PatnaikI} is precisely the class of problems
solvable in $\DynFO$ (equivalently $\DynAC^0$), which means under
small changes the problem can be solved using a first-order
formula. These different ways of describing $O(1)$ parallel time are
essentially equivalent: because $\FO$ and \emph{uniform} $\AC^0$ are
equivalent \cite{BIS}. There is renewed interest in this model of
computation since a long-standing open problem, whether directed graph
reachability is in $\DynFO$, under single edge changes
\cite{PatnaikI}, was resolved in the affirmative~\cite{DKMSZ18}.

In the present paper, it is more convenient to describe our results,
which are essentially algorithmic and do not have a logical flavor, in
terms of the parallel circuit class $\AC^0$.

\paragraph*{\textbf{The results of this paper}}\hfill{~}

\vspace{2mm}

In this paper, we obtain results on the dynamic parallel complexity of
abelian $\CGM$ and abelian Group Isomorphism. Our motivation is to see
if we can exploit the underlying group structure to give a $\DynAC^0$
algorithm for the $\CGM$ membership queries while the generating set
$S$ is dynamically changing with insertions and deletions. We are able
to obtain for the abelian group case the following results.

$\bullet$ First, we consider the more general problem of Monoid
Membership, where $G$ is a \emph{monoid} input by its multiplication
table. When $G$ is a \emph{commutative monoid} we give a deterministic
$\DynAC^0$ algorithm for membership testing that supports $O(1)$
insertions and deletions in each step. The main idea is to maintain
the monoid $M$ in a tree-like data structure. The cyclic monoids are
at the leaves of the tree and each internal node has the submonoid of
$M$ generated by set of all its descendant leaves. Furthermore, each
internal node will also hold submonoids corresponding to deletions of
its descendant nodes.
  
$\bullet$ We can use this tree-like data structure more powerfully in
the case of abelian groups to obtain a randomized $\DynAC^0$ algorithm
for abelian $\CGM$ that supports $\polylog(|G|)$ insertions/deletions
to $S$ in each step. The main fact that we exploit here is that adding
$\polylog(n)$ many unary numbers can be done in $\AC^0$. Thus, from an
abelian subgroup $H$ given by $\polylog(n)$ many generators we can
randomly sample from $H$ and hence list out all of $H$ with high
probability in $\AC^0$.

$\bullet$ If the number of insertions/deletions is at most $O(\log
n/\log\log n)$ then we obtain a deterministic $\DynAC^0$ algorithm for
abelian $\CGM$. Here our techniques are linear algebra based: we need
to consider some \emph{miniature} linear algebra problems: where the
number of variables is $O(\log n)$ and we adapt existing linear
algebraic techniques to solve this.

$\bullet$ We obtain analogous results for the dynamic abelian Group
Isomorphism.

\paragraph*{\textbf{The techniques used}}\hfill{~}

A brief word about the techniques we use in the paper. The main
dynamic complexity technique used is the idea of \emph{muddling}
\cite{DMSVZ19}. Intuitively, we divide the computation into time
intervals (of length $i(n)$ for size $n$ inputs and a suitable
function $i(\cdot)$). For a query that will arrive in time instant
$t$, the algorithm starts rebuilding the data structure that is
maintained at time instant $t-i(n)$. At any given instant, therefore,
the algorithm is running upto $i(n)$ threads of computation. This
broad technique is applied in this paper to the problems
considered. In addition, we will use some elementary group theory,
especially of finite abelian groups, and some tree-like data
structures.

\paragraph{\textbf{Organization}}

 In Section~\ref{sec2} we give some basic definitions and notation,
 and some background about the dynamic parallel complexity model. In
 Section~\ref{sec3} we explain the $\DynAC^0$ algorithm for
 commutative monoid membership under single insertions/deletions to
 the generating set. Sections~\ref{sec4} and \ref{sec4a} contain,
 respectively, the randomized and deterministic $\DynAC^0$ algorithms
 for abelian $\CGM$.  In Section~\ref{sec:AbIso} we apply the $\CGM$
 results to obtain $\DynAC^0$ algorithms for Abelian Group
 Isomorphism. Finally, in Section~\ref{sec6} we discuss dealing with
 small changes to the group multiplication table itself.

 \section{Preliminaries}\label{sec2}

 \paragraph*{\textbf{Complexity Classes}} We will mainly consider
 parallel complexity classes defined by boolean circuits. Let
 $\AC(t(n))$ denote the class of decision problems that have
 polynomial-size circuits of \emph{depth} $t(n)$ for inputs of size
 $n$, where the AND and OR gates of the circuit are allowed to be
 unbounded fanin. This circuit model is essentially equivalent to
 $t(n)$ parallel time on a CRCW PRAM model (with polynomially many
 processors), where CRCW allows for concurrent reads and writes to a
 memory location.  More details of these connections can be found in
 \cite{Immerman}. In particular, $\AC(1)$ is usually denoted $\AC^0$
 and $\AC^1$ denotes $\AC(\log n)$. The class $\AC(\log\log n)$ is of
 interest in this paper due to the result of Barrington et al
 \cite{BKLM} showing that abelian $\CGM$ is in uniform $\AC(\log\log
 n)$.  This class is also denoted $\FOLL$ in \cite{BKLM} (for
 first-order formulas with $\log\log n$ depth quantifiers for size $n$
 inputs). We use both notations interchangeably.

 An $\AC((t(n))$ algorithm will actually be given by a family of
 circuits $\{C_n\}_{n>0}$, where $C_n$ solves the problem for inputs
 of length $n$, has depth $t(n)$, and size bounded by some polynomial
 $n^c$ for a constant $c>0$. We need a \emph{uniformity condition}
 that tells us how efficiently we can construct the circuits $C_n$. A
 stringent condition is the so-called \emph{dlogtime uniformity}: Each
 gate in the circuit $C_n$ can be described using $O(\log n)$ bits and
 the uniformity condition requires that the gate connections can be
 checked in deterministic time linear in $O(\log n)$ by a random
 access machine \cite{BIS}. The class dlogtime uniform $\AC^0$
 coincides with $\FO$, where the structures on which the formulas are
 evaluated are equipped with some suitable predicates \cite{BIS}.

 The parallel dynamic algorithms in this paper are describable by
 circuits that are dlogtime uniform.
 
\paragraph*{\textbf{The parallel dynamic complexity model}} 

We briefly explain the parallel dynamic complexity model. The
underlying model for describing the algorithms can seen as a CRCW
PRAM.  That means the algorithm can use polynomially many parallel
processors accessing a shared memory that allows concurrent reads and
concurrent writes with well defined notion of which write succeeds. We
can also give a circuit complexity description for the model.

\begin{enumerate}
\item For each problem there is a well-defined notion of small
  changes to the input. 
\item In the CRCW PRAM setting, the algorithm uses $n^c$ processors
  for length $n$ inputs for some constant $c>0$ that depends on the
  problem.
\item At each time instant the algorithm receives as input $i(n)$
  small changes to the input. To the input at time instant $t$, the
  algorithm is required to output the answer in constant time. I.e.,
  within time instant $t+O(1)$.
\item In the boolean circuit setting, for inputs of length $n$ the
  model can be seen as a layered boolean circuit that is of width
  $n^c$ for length $n$ inputs, where the layers denote the time
  instants. At each layer it receives as input the $i(n)$ changes. For
  the input at layer $t$ it needs to output the answer before layer
  $t+O(1)$.

\item We use $\DynAC^0$ to broadly denote the class of problems that
  have dynamic algorithms that take $O(1)$ time with polynomially many
  processors.\footnote{This coincides with $\DynFO$ \cite{PatnaikI}
    when the dlogtime uniformity conditions are met.} We explicitly
  state the number of small changes to the input that can be handled
  at each time step. We also refer to such dynamic algorithms as
  $\DynAC^0$ algorithms.
\end{enumerate}

Depending on the problem at hand, the dynamic algorithm usually works
by creating a suitable data structure from the given input which it
updates with the small changes to the input.

\begin{comment}
The algorithm maintains a data structure $\DS$ which is recreated from
time to time. Additionally, the algorithm also maintains an additional
data structure $\ADS$ of size bounded by $t(n).b(n)$ which is the
number of changes to the input in a $t(n)$ time interval and $b(n)$ is
a bound on the number of changes at each time step.

The time line of computation is divided into intervals of length
$t(n)$, where $t(n)$ is the time required to rebuild the data
structure $\DS$ along with additional structure $\ADS$ into a new
version of $\DS$.

The algorithm, at each time instant $t$, starts rebuilding the data
structure $\DS(t)$ from the current input at time $t$. This rebuilding
of $\DS(t)$ is completed in $t(n)$ steps at time $t+t(n)-1$. At this
time $\ADS(t)$ (the bulk changes that arrived in the interval
$[t,t+t(n)-1]$). The algorithm answers any problem query at this time
step in $\C$.

\end{comment}


\section{A Dynamic $\CGM$ Algorithm for Commutative Monoids}\label{sec3}

In this section, we consider the more general problem of \emph{Cayley
  Monoid Membership} for commutative monoids: Given a commutative
monoid $M$ by its multiplication table, a subset $S\subseteq M$, and
an element $m\in M$, check if $m$ is in the submonoid $\angle{S}$
generated by $S$. By abuse of notation, we term this the $\CGM$
problem for commutative monoids.

We present a tree-based data structure to maintain the generating set
$S$, using which we obtain a $\DynAC^0$ algorithm that supports a
single insertion/deletion to/from the subset $S$ at each step. We will
use this data structure with suitable modifications in Section~\ref{sec4}.

\paragraph{\textbf{The $\CGM$ problem for monoids}}

It is known that $\CGM$ for monoids is reducible to directed graph
reachability. To see this just construct the Cayley digraph
$\Cay(M,S)$ of the monoid $M$ corresponding to generating set $S$. The
graph has vertex set $M$ and for every $m\in M$ and every $m' \in S$,
the directed edge $(m,mm')$ is in the edge set. Clearly, an element
$t\in M$ is in the submonoid $\ang{S}$ iff there is a directed path
from the monoid identity $e$ to $t$ in this digraph. Hence Cayley
Membership for monoids is in $\NL$.

We will use the weaker upper bound of $\LogCFL = \SAC^1$ for the
$\CGM$ problem for commutative monoids.\footnote{Recall that an
$\SAC^1$ circuit is a restricted form of $\AC^1$ circuit: it is of
logarithmic depth, is semi-unbounded, and is allowed negations only at
the input gates. That means, either each AND gate in the circuit has
fanin $2$ or, equivalently, each OR gate has fanin $2$ \cite{Venk}.}
  This upper bound actually gives a tree-like data structure, using
  which we obtain the $\DynAC^0$ algorithm for the problem that
  supports single insertions/deletions.

As $M$ is commutative, any element of the submonoid $\angle{S}$ is
expressible as a product of powers of elements in $S$: $\prod_{s\in
  S}s^{e_s}$, where $0 \le e_s\le n$ and $n=|M|$.

We claim that the entire submonoid $\angle{S}$ can be listed as the
output of an $\SAC^1$ circuit. The circuit takes $S$ as input, as a
$n$-bit binary number with $i^{th}$ bit indicating whether the element
$m_i\in M$ is in $S$, and outputs an $n$-bit binary number whose $i$th
bit is $1$ iff $m_i\in M$ is in the submonoid $\angle{S}$.

Let $S_1,S_2\subseteq M$ such that the monoid identity $1$ is in both
$S_1$ and $S_2$. Their product is $S_1S_2=\{ab\mid a\in S_1, b\in
S_2\}$.

\begin{proposition}\label{prop:setMult}
Given as input the subsets $S_1$ and $S_2$ of a monoid $M$ their
product $S_1S_2$ can be computed in $\SAC^0$.
\end{proposition}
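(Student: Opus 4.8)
The plan is to write down the obvious first-order definition of membership in the product set and observe that it translates directly into a constant-depth semi-unbounded circuit. Concretely, for a target element $m\in M$ we have
$$m\in S_1S_2 \iff \exists a,b\in M\ \big( a\in S_1 \ \wedge\ b\in S_2 \ \wedge\ ab=m \big),$$
since by definition every element of $S_1S_2$ is a product $ab$ with $a\in S_1$ and $b\in S_2$. The multiplication table of $M$ is part of the input, so the predicate $ab=m$ is simply a lookup into the table; I would treat ``the $(a,b)$ entry of the table equals $m$'' as a single input bit (or a conjunction of a constant number of input bits, depending on the chosen encoding of the table).

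From this characterization the circuit is built as follows. The output is an $n$-bit string, and its bit indexed by $m$ is computed by a single OR gate whose inputs range over all pairs $(a,b)\in M\times M$. The wire coming from the pair $(a,b)$ carries the conjunction of the three bits $[a\in S_1]$, $[b\in S_2]$, and $[ab=m]$, the last of these read off the multiplication table. A conjunction of three bits is realized by a depth-two tree of fan-in-$2$ AND gates, so every AND gate in the circuit has fan-in $2$, while the OR gate has fan-in $|M|^2=n^2$. No negations are needed. The resulting circuit therefore has depth $O(1)$, size $O(n^3)$, uses only fan-in-$2$ AND gates, and is plainly dlogtime uniform since every wire is described by elementary arithmetic on the indices $a,b,m$; hence it is an $\SAC^0$ circuit.

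The only point requiring care---and the step I would flag as the potential obstacle---is that the multiplication table is supplied as input rather than fixed in advance, so the set of pairs $(a,b)$ with $ab=m$ cannot be hardwired into the circuit's topology. This is handled cleanly by letting the OR gate range over all of $M\times M$ and folding the condition $ab=m$ into the conjunction as a table lookup, at the cost of only a polynomial factor in size. Once this encoding is fixed, the semi-unbounded condition is immediate: each conjunction involves exactly three literals, so bounding the AND fan-in by $2$ costs only constant additional depth. I expect no further difficulties, and the same template should be reusable whenever a product of two sets---or, by iteration, a bounded product of sets---must be formed inside a low-depth circuit.
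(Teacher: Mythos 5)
Your proof is correct and is precisely the argument the paper leaves implicit: Proposition~\ref{prop:setMult} is stated there without proof, the intended justification being exactly your observation that $m\in S_1S_2$ is a single existential quantification over pairs $(a,b)$, realized by one unbounded fan-in OR of bounded fan-in conjunctions of the bits $[a\in S_1]$, $[b\in S_2]$, and the Cayley-table bit $[ab=m]$, which is a constant-depth, polynomial-size, semi-unbounded circuit with no negations. Your flagged subtlety about the encoding is also the right one to worry about---taking the multiplication table as the ternary relation ``$ab=m$'' (one input bit per triple), as is standard in the paper's descriptive-complexity setting, is what makes the lookup a single literal and the depth constant, whereas a binary-index encoding would force an AND of $\log n$ literals and break the $\SAC^0$ bound.
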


For each $a\in S$ let $P_a = \{a^i: 1 \leq i \leq n\}$. Notice that
$P_a$ can be computed directly from the multiplication table for $M$
in $\L$ which is contained in $\SAC^1$.

\paragraph{\textbf{The tree-like data structure for $S$}}

We create a balanced binary tree $T$ with leaves labeled by the
distinct elements $a\in S$. To the leaf labeled $a$ we associate the
subset $P_a$. Inductively, to each internal node $u$ of $T$ we
associate the product $M_u=M_vM_w$, where $v$ and $w$ are its two
children. Since $|M|=n$, the tree $T$ has depth bounded by $\log
n$. By Proposition~\ref{prop:setMult} the tree $T$ can be created by
an $\SAC^1$ circuit. 

\begin{proposition}\label{tree-monoid-static}
  For each node $u$ of the tree $T$ the subset $M_u$ associated with
  $u$ is the submonoid generated by the subset $L_u=\{a\in S\mid a$
  such that $u$ has the leaf labeled $a$ as descendant$\}$. I.e.,
  $M_u=\angle{L_u}$.
\end{proposition}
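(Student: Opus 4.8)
The plan is to prove the statement by structural induction on the tree $T$, working upward from the leaves to the root. Only two properties of $M$ are needed: finiteness (to control the cyclic pieces sitting at the leaves) and commutativity (which is precisely what guarantees that a product of two submonoids is again a submonoid). I would state the induction hypothesis at a node $u$ as ``$M_u$ is a submonoid and $M_u=\angle{L_u}$'', and prove it for $u$ assuming it for the children of $u$.

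For the base case, let $u$ be a leaf labeled $a\in S$, so $L_u=\{a\}$ and $M_u=P_a=\{a^i:1\le i\le n\}$. First I would observe that, since $|M|=n$, the sequence $a,a^2,a^3,\dots$ is eventually periodic: among the first $n+1$ powers two must coincide by pigeonhole, which forces periodicity, and hence every value the sequence ever attains already occurs among $a^1,\dots,a^n$. Thus $P_a$ is exactly the set of positive powers of $a$, i.e. the submonoid generated by $\{a\}$. (There is one bookkeeping point here: in a general commutative monoid the identity $e$ need not be a positive power of $a$, so to make each $M_u$ a genuine submonoid I would adopt the convention that $e$ is adjoined to every $P_a$; alternatively one reads $\angle{\cdot}$ as generation of the ambient submonoid. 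Either way the base case holds and the inductive step below goes through.)

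For the inductive step, let $u$ be an internal node with children $v,w$, so that $M_u=M_vM_w$ by construction. By the induction hypothesis $M_v=\angle{L_v}$ and $M_w=\angle{L_w}$ are submonoids, and the leaf-descendants of $u$ are exactly those of $v$ together with those of $w$, giving the set identity $L_u=L_v\cup L_w$. It then suffices to show $\angle{L_v}\,\angle{L_w}=\angle{L_v\cup L_w}$. The inclusion $\subseteq$ is immediate, since each factor lies in $\angle{L_v\cup L_w}$ and the latter is closed under products. For $\supseteq$ the essential point — and the only place commutativity enters — is that $M_vM_w$ is itself closed under multiplication: for $a_1b_1,a_2b_2\in M_vM_w$ with $a_i\in M_v,\,b_i\in M_w$ one has $(a_1b_1)(a_2b_2)=(a_1a_2)(b_1b_2)\in M_vM_w$, where commuting $b_1$ past $a_2$ is exactly what lets the product refactor into a $M_v$-part times a $M_w$-part. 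Since $e\in M_v\cap M_w$ we also get $M_v,M_w\subseteq M_vM_w$, so $M_vM_w$ is a submonoid containing $L_v\cup L_w$, whence $\angle{L_v\cup L_w}\subseteq M_vM_w=M_u$.

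I expect the closure argument in the inductive step to be the only real obstacle, and it is a deliberately mild one: it is precisely where commutativity is indispensable, and it pinpoints why this tree construction computes submonoid generation correctly only for commutative $M$ (for noncommutative $M$ the set $M_vM_w$ need not be closed). The remaining ingredient, the identity $L_u=L_v\cup L_w$, is a routine fact about the descendant leaves of a binary tree. Together with Proposition~\ref{prop:setMult}, which bounds the cost of each product step as one climbs the tree of depth at most $\log n$, this correctness statement is what justifies reading off $\angle{S}=M_{\text{root}}$ from the data structure.
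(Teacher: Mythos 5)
Your proof is correct and follows essentially the same route as the paper's: structural induction on the tree, with commutativity of $M$ invoked exactly where it is needed, namely to show $M_vM_w=\angle{L_v\cup L_w}$ at each internal node. The paper's own proof is a two-line sketch of this induction; your write-up merely fills in the details it leaves implicit (the pigeonhole argument that $P_a$ exhausts all positive powers of $a$, and the bookkeeping convention that the identity belongs to each leaf set so that products of submonoids behave correctly, which the paper itself presupposes in Proposition~\ref{prop:setMult}).
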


\begin{proof}
  This is easily proved by induction on the tree $T$. The point to
  note is that commutativity of the monoid $M$ is crucial: if
  $M_v=\angle{L_v}$ and $M_w=\angle{L_w}$ by induction hypothesis,
  then we note that $M_vM_w=\angle{L_v\cup L_w}$ because all the
  elements commute with each other. Hence $M_u=M_vM_w=\angle{L_u}$.
\end{proof}

\begin{lemma}
  The $\CGM$ problem for commutative monoids is in $\SAC^1$.
\end{lemma}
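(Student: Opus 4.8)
The plan is to observe that the tree $T$ already computes the full submonoid at its root, and then to argue that the entire tree evaluation can be carried out by a single semi-unbounded circuit of logarithmic depth. Since every leaf of $T$ is a descendant of the root $r$, we have $L_r = S$, so Proposition~\ref{tree-monoid-static} gives $M_r = \angle{L_r} = \angle{S}$. Consequently the membership test reduces to reading a single output bit: $m \in \angle{S}$ iff the bit of (the indicator vector of) $M_r$ indexed by $m$ is set. It therefore suffices to show that the $n$-bit indicator vector of $M_r$ is computable by an $\SAC^1$ circuit.

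First I would handle the leaves. For each generator $a \in S$, the set $P_a = \{a^i : 1 \le i \le n\}$ is exactly the set of vertices reachable from $a$ in the functional digraph on $M$ whose edges send $x \mapsto xa$; this is deterministic (out-degree one) reachability, which lies in $\L \subseteq \SAC^1$. Hence the indicator vector of each $P_a$ is produced by an $\SAC^1$ subcircuit of depth $O(\log n)$. Next I would wire the internal nodes: for an internal node $u$ with children $v, w$, the indicator of $M_u = M_v M_w$ is computed from those of $M_v$ and $M_w$ by the $\SAC^0$ construction of Proposition~\ref{prop:setMult}, namely the bit for $c$ is $\bigvee_{(x,y)\,:\, xy=c}\big( [x \in M_v] \wedge [y \in M_w]\big)$, an unbounded-fanin OR of fanin-$2$ ANDs (the table lookup $xy=c$ being hard-wired into the circuit). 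Each tree level thus contributes only constant depth with all AND gates of fanin $2$, and since $T$ is balanced with depth $O(\log n)$, the tree portion is a semi-unbounded circuit of depth $O(\log n)$.

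The main point to verify carefully is that stacking these pieces keeps us inside $\SAC^1$ rather than merely in $\AC^1$. Concretely I would check that: (i) inlining the $O(\log n)$-depth $\L$-subcircuits at the leaves together with the $O(\log n)$ levels of fanin-$2$-AND set-product gates yields a single circuit of total depth $O(\log n)$; and (ii) semi-unboundedness is preserved throughout, so that the only unbounded fanin arises at OR gates (both in the reachability subcircuits and in the set-product layers), while every AND gate has fanin $2$. The mild obstacle here is bookkeeping: one must confirm that the standard semi-unbounded circuits witnessing $\L \subseteq \NL \subseteq \LogCFL = \SAC^1$ can be attached at the leaves without introducing unbounded-fanin AND gates and without pushing the cumulative depth past $O(\log n)$. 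Granting this, the output bit indexed by $m$ at the root decides membership, establishing that $\CGM$ for commutative monoids is in $\SAC^1$.
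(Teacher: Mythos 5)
Your proposal is correct and follows essentially the same route as the paper: the paper also reduces membership to reading the bit for $m$ in the set $M_r$ at the root of $T$, with the $\SAC^1$ bound coming from the $\L$-computable leaf sets $P_a$ and $O(\log n)$ levels of the $\SAC^0$ set-product of Proposition~\ref{prop:setMult}. The only difference is that the paper states this in one line, leaving implicit the depth and semi-unboundedness bookkeeping that you verify explicitly (and which indeed goes through, since the set-product layers are monotone and $\L\subseteq\SAC^1$).
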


\begin{proof}
  Given $m\in M$, to check if $m\in\angle{S}$ we just need to
  check if $m$ is in the submonoid $S_r$ labeling the root $r$
  of $T$. 
\end{proof}

\paragraph{\textbf{The Dynamic Setting}}

We will modify the above construction to obtain a dynamic data
structure. First we expand the tree by also including leaves for each
element of $M\setminus S$. For each leaf $a\in M$ we pre-compute its
power set $P_a$ as already defined. However, we will associate $P_a$
to the leaf labeled $a$ precisely if $a\in S$, and otherwise we
associate the identity element $\{1\}$ with the leaf $a$.

We will need to dynamically maintain the following data at each node
of the expanded tree.

\begin{enumerate}

\item For each node $\nu$ of the tree we have the submonoid $M_\nu$
  generated by the subsets associated with the leaves below node $\nu$
  in the tree.

\item Additionally, at each node $\nu$ we will maintain the submonoid
  $M_\nu:\M_\mu$ for each descendant $\mu$ of $\nu$ in the tree, where
  $M_\nu:M_\mu$ denotes the submonoid generated by the subsets
  associated with all leaves that are descendants of $\nu$ \emph{but
    are not} descendants of $\mu$. Equivalently, it is as if the
  submonoid associated with node $\mu$ is reset to $\{1\}$ and then
  the rest of the submonoid $M_\nu$ is computed.
\end{enumerate}

Essentially, as in Propositions~\ref{prop:setMult} and
\ref{tree-monoid-static}, given a subset $S$ we can construct the tree
data structure along with the data at each node as described above.

\begin{proposition}\label{tree-monoid}
  Given as input a subset $S\subseteq M$ for a commutative monoid $M$
  (given by its multiplication table), we can construct the tree data
  structure along with the data at each node as described above in
  $\SAC^1$. Furthermore, given a membership query $m\in M$, testing
  if $m$ is in the current submonoid $\angle{S}$ can be done in $\AC^0$.
\end{proposition}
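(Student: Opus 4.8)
The plan is to treat the two assertions separately: the $\SAC^1$ bound for building the entire data structure from $S$, and the $\AC^0$ bound for answering a single membership query once the structure is in place. The query is the easy half. The structure stores at the root $r$ the submonoid $M_r$ as an $n$-bit indicator vector, and by Proposition~\ref{tree-monoid-static} applied to the full leaf set (where the leaves labeled by $M\setminus S$ contribute $\{1\}$ and so add no generators) we have $M_r=\angle{S}$. Hence testing $m\in\angle{S}$ reduces to reading the bit of $M_r$ indexed by $m$, a single table lookup, which is in $\AC^0$.

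For the construction, I first build the balanced binary tree $T$ on $n$ leaves, one per element of $M$, attaching to each leaf $a$ the set $P_a$ if $a\in S$ and $\{1\}$ otherwise; each $P_a$ is computable in $\L\subseteq\SAC^1$ from the multiplication table, and the choice between $P_a$ and $\{1\}$ uses the input bit $[a\in S]$ positively. I then compute the submonoids $M_\nu$ by composing the set-product circuit of Proposition~\ref{prop:setMult} up the tree, setting $M_\nu=M_vM_w$ for the children $v,w$. Each set product is an unbounded $\OR$, over the pairs $(a,b)$ with $ab=c$ fixed by the table, of the fanin-two conjunction $M_v[a]\wedge M_w[b]$; this is $\SAC^0$ and monotone in its two set inputs, so substituting the lower levels preserves semi-unboundedness (all $\AND$ gates keep fanin two and no negations appear). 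Composing over the $O(\log n)$ levels of $T$ therefore yields an $\SAC^1$ circuit, and correctness $M_\nu=\angle{L_\nu}$ is exactly Proposition~\ref{tree-monoid-static}.

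It remains to produce the auxiliary sets $M_\nu:M_\mu$ for every node $\nu$ and descendant $\mu$. The key observation is that, by commutativity, $M_\nu:M_\mu$ is simply the product $\prod_\ell Q_\ell$ of the leaf-sets $Q_\ell$ (each $P_a$ or $\{1\}$) over the leaves $\ell$ descending from $\nu$ but not from $\mu$; equivalently, it is the product $M_{w_1}\cdots M_{w_k}$ of the submonoids hanging off the off-path siblings $w_1,\ldots,w_k$ met on the path from $\mu$ up to $\nu$, where $k\le\log n$. Computing each such product as a balanced binary tree of set products costs $O(\log n)$ levels of $\SAC^0$ (in fact $O(\log\log n)$ if one reuses the already-computed $M_{w_i}$), so it stays within $\SAC^1$, and correctness again follows from commutativity exactly as in Proposition~\ref{tree-monoid-static}.

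The only point needing care, and the nearest thing to an obstacle, is the size and uniformity bookkeeping for this last step: there are $O(n^2)$ pairs $(\nu,\mu)$, but each is handled by an independent polynomial-size $\SAC^1$ subcircuit, so the whole family is of polynomial size and $O(\log n)$ depth, with gate connections determined by $T$ and the multiplication table, keeping the construction dlogtime uniform. Running all of these in parallel together with the $M_\nu$ computation gives the claimed $\SAC^1$ construction, which with the $\AC^0$ query above completes the proof.
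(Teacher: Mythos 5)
Your proof is correct and takes essentially the same approach as the paper's (much terser) proof: membership testing is a lookup of $m$ in the list $M_\rho=\angle{S}$ stored at the root, and the construction computes all the submonoids $M_\nu$ and $M_\nu:M_\mu$ in parallel by iterated set products up the tree in $\SAC^1$. Your decomposition of $M_\nu:M_\mu$ as the product of the off-path sibling submonoids, justified by commutativity, is exactly the observation the paper leaves implicit in its one-line argument.
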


\begin{proof}
 For the tree construction, it suffices to observe that we can do the
 computation of each $M_\nu:M_\mu$ in parallel in $\SAC^1$. For
 membership testing, if the $\rho$ is the root of the tree then the
 submonoid $M_\rho = \angle{S}$ is available as a list at the node
 $\rho$. Hence membership testing is in $\AC^0$.
\end{proof}


\paragraph{Handling single insertions and deletions}

Next we show that the above data structure supports single insertions
and deletions to $S$ at each time step. The updates to the data
structure can be carried out in $\AC^0$ as described below.

First we consider deletions. Suppose $a\in S$ is deleted. Then
the following changes are carried out.
\begin{enumerate}
\item The set associated to the leaf node corresponding to $a$ is
  reset to $\{1\}$.
\item For each tree node $\nu$, if the leaf node $a$ is a descendant
  of $\nu$ then the submonoid $M_\nu$ is replaced with $M_\nu:P_a$,
  which is already pre-computed at node $\nu$. If $a$ is not a
  descendant of $\nu$ then no changes are required at node $\nu$.
\item For each tree node $\nu$ and each descendant $\mu$ of $\nu$
  consider the submonoid $M_\nu:M_\mu$. If if $a$ is a descendant of
  $\mu$ then $M_\nu:M_\mu$ needs no changes. Now suppose $a$ is a
  descendant of $\nu$ but not a descendant of $\mu$. Let the least
  common ancestor of $a$ and $\mu$ be $\mu'$. Let $\mu_1$ and $\mu_2$
  be the children of $\mu'$ such that $a$ is a descendant of $\mu_1$
  and $\mu$ is a descendant of $\mu_2$. Then $M_\nu:M_\mu$ is
  recomputed as the product $(M_\nu:M_{\mu'})\cdot
  (M_{\mu_1}:P_a)\cdot (M_{\mu_2}:M_{\mu})$.
\end{enumerate}

As each node in the tree can be processed in parallel we have
the following.
  
\begin{proposition}
  The above deletion operations can be carried out in $\AC^0$.
\end{proposition}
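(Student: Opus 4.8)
The plan is to argue purely about complexity, since correctness of the update rules is already built into their description, and the guiding principle is that the tree $T$ is \emph{static}: insertions and deletions change only the sets stored at the nodes, never the shape of $T$. Because $T$ is a fixed balanced binary tree on $O(n)$ nodes of depth $O(\log n)$, every navigational predicate I need — whether $a$ is a descendant of a given node, the least common ancestor $\lca(a,\mu)$, and the two children of a node — depends only on the fixed structure of $T$. With a standard heap indexing of the nodes these are computable by bit operations on the $O(\log n)$-bit indices (for instance, $\lca$ corresponds to a longest common prefix, which is $\AC^0$), so all of them lie in $\AC^0$; alternatively I can precompute and store them as tables, making each a lookup. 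I would then run all three update steps in parallel over every node $\nu$ and every pair $(\nu,\mu)$ with $\mu$ a descendant of $\nu$. There are $O(n)$ nodes and $O(n\log n)$ such pairs, hence only polynomially many independent tasks, and it suffices to show that each task is in $\AC^0$.

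Step~1 (resetting the leaf $a$ to $\{1\}$) is trivial, and Step~2 is essentially a table lookup: once a node $\nu$ has decided in $\AC^0$ that $a$ is a descendant, its new value $M_\nu : P_a$ is already stored at $\nu$, so nothing beyond a copy is required. The substance is Step~3. For each relevant pair I would compute $\mu'=\lca(a,\mu)$ together with its children $\mu_1,\mu_2$ in $\AC^0$, and then reset $M_\nu : M_\mu$ to the product $(M_\nu : M_{\mu'})\cdot(M_{\mu_1}:P_a)\cdot(M_{\mu_2}:M_\mu)$.

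The key point, and the place I would be most careful, is to verify that this product never triggers a recursive recomputation that would blow up the circuit depth. Each of the three factors is a value \emph{already present} in the data structure before the deletion, and each is \emph{unaffected} by deleting $a$, since in every factor the leaf $a$ has effectively been removed: $a$ descends from $\mu'$ and so does not contribute to $M_\nu:M_{\mu'}$; $M_{\mu_1}:P_a$ excludes $a$ by definition; and $a$ descends from $\mu_1$ but not $\mu_2$, so it does not contribute to $M_{\mu_2}:M_\mu$. Hence the right-hand side is merely a product of three pre-existing sets. By Proposition~\ref{prop:setMult} a single set product is in $\SAC^0\subseteq\AC^0$, and a product of three sets is the composition of two such circuits, so its depth stays $O(1)$ and its size polynomial, i.e.\ it is in $\AC^0$. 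Since each of the polynomially many tasks is thus in $\AC^0$ and all of them execute in parallel, the entire deletion is carried out in $\AC^0$, as claimed.
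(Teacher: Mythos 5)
Your proof is correct and follows essentially the same approach as the paper, which justifies the proposition simply by noting that all nodes (and node pairs) can be processed in parallel, each update being a product of a constant number of pre-computed submonoids via Proposition~\ref{prop:setMult}. Your additional observations --- that the tree-navigation predicates are $\AC^0$-computable and, crucially, that every factor in the Step~3 product is a pre-deletion stored value unaffected by removing $a$, so no recursive recomputation occurs --- are exactly the details the paper leaves implicit.
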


Next we consider insertions. Let $a\in M$ be inserted in $S$.

\begin{enumerate}
\item At the leaf node $a$ the associated submonoid is set to $P_a$.
\item For each node $\nu$ containing the leaf node $a$ as descendant,
  we can update $M_\nu$ directly to the product submonoid $M_\nu\cdot
  P_a$.
\item For each node $\nu$ and a descendant $\mu$ we will update
  $M_\nu:M_\mu$ as follows: if the leaf node $a$ is not a descendant
  of $\nu$ then no changes are required. If it is a descendant of
  $\mu$ as well then also no changes are required. Otherwise, let
  $\mu'$ be the least common ancestor of $a$ and $\mu$, and the two
  children of $\mu'$ be $\mu_1$ (containing $a$ as leaf node) and
  $\mu_2$ (containing $\mu$ as descendant). Then we replace the
  submonoid $M_\nu:M_\mu$ with the product submonoid
  $(M_\nu:M_{\mu'})\cdot (M_{\mu_1}\cdot P_a)\cdot (M_{\mu_2}:M_\mu)$.
\end{enumerate}

These updates can all be carried out in parallel and each requires just
a product of a constant number of pre-computed submonoids. Hence
we have the following

\begin{proposition}\label{insertions=monoid}
The above insertion operations can be carried out in $\AC^0$.
\end{proposition}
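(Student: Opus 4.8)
The plan is to verify that each of the three update steps described before the statement is computable by a constant-depth, polynomial-size circuit, and that all the updates across the tree can be performed simultaneously. Since the circuit model charges nothing for parallelism, it suffices to bound the depth of the computation performed at a single node (for the update of $M_\nu$) and at a single (node, descendant) pair (for the update of $M_\nu:M_\mu$), and to confirm that these computations read only data already present in the current data structure. First I would note that the skeleton of the tree is fixed once and for all: it is a balanced binary tree whose $O(n)$ nodes carry static indices. With a standard heap-style indexing, the structural predicates used in the updates---whether a leaf is a descendant of a node, the least common ancestor $\mu'$ of the inserted leaf $a$ and a node $\mu$, and the two children $\mu_1,\mu_2$ of $\mu'$---are functions of the node indices alone and are computable in $\AC^0$ (indeed in $\FO$, as they reduce to prefix comparisons on the index bit-strings). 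Hence selecting which case applies to a given pair $(\nu,\mu)$ is itself an $\AC^0$ computation.

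Next I would bound the cost of the arithmetic on submonoids. At the leaf $a$ the associated set is simply overwritten by the pre-computed $P_a$; each affected $M_\nu$ is replaced by the single product $M_\nu\cdot P_a$; and each affected $M_\nu:M_\mu$ is replaced by the product $(M_\nu:M_{\mu'})\cdot(M_{\mu_1}\cdot P_a)\cdot(M_{\mu_2}:M_\mu)$ of at most three already-stored sets. By Proposition~\ref{prop:setMult} the product of two subsets of $M$ lies in $\SAC^0\subseteq\AC^0$, and composing a constant number of such products keeps the depth constant, so each individual update is an $\AC^0$ computation. Crucially, every factor is a submonoid already maintained in the data structure at the current time step: the factors $M_\nu:M_{\mu'}$ and $M_{\mu_2}:M_\mu$ are unaffected by the insertion, since $a$ lies below $\mu_1$ and is therefore excluded from both, while $M_{\mu_1}$ is read in its pre-insertion form and corrected by the single extra factor $P_a$. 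Thus no recomputation of the old structure is needed, and the whole update consists of $O(n)$ parallel node updates and $O(n^2)$ parallel pair updates, each an $\AC^0$ computation reading polynomially many bits, yielding a combined circuit of constant depth and polynomial size.

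The step I would treat as the main obstacle is not the complexity bookkeeping but checking that the factorization used for $M_\nu:M_\mu$ is \emph{faithful}: I must argue that the submonoid generated by the leaves that are descendants of $\nu$ but not of $\mu$ decomposes, along the path $\mu'\to\mu$, into exactly the three listed factors, with the freshly inserted generator $a$ contributed solely through $M_{\mu_1}\cdot P_a$. This rests on commutativity via Proposition~\ref{tree-monoid-static}: the relevant set of generating leaves partitions as those below $\nu$ but not below $\mu'$, those below $\mu_1$, and those below $\mu_2$ but not below $\mu$, and because all elements commute the generated submonoid is the product of the submonoids generated by the three parts. Once this partition of the generating leaves is justified, correctness of the update formula follows, and the $\AC^0$ bound of the statement is then immediate from the two preceding paragraphs.
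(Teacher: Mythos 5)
Your proposal is correct and follows essentially the same route as the paper: the paper's justification is exactly that all updates can be performed in parallel and each one is a product of a constant number of already-stored submonoids, each such product being computable in constant depth by Proposition~\ref{prop:setMult}. The extra details you supply---the $\AC^0$ computability of the static tree predicates and the commutativity-based correctness of the three-factor decomposition of $M_\nu:M_\mu$---are elaborations of points the paper leaves implicit, not a different argument.
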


To summarize the above results, we have the main theorem of this
section.

\begin{theorem}\label{thm:monoid}
  Let $M$ be a commutative monoid given by its multiplication table
  and $S\subseteq M$ generate a submonoid $\angle{S}$ of $M$. Then
  there is a deterministic $\DynAC^0$ algorithm that answers
  membership queries $m\in\angle{S}$ given $m\in M$ and supports
  single insertions and deletions to the generating set $S$ at each
  time step.
\end{theorem}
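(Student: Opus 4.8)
The plan is to assemble Theorem~\ref{thm:monoid} directly from the pieces already established in this section, framing everything in terms of the muddling-free tree data structure. The statement asserts three things: (i) the data structure can be built, (ii) membership queries are answerable in $\AC^0$, and (iii) single insertions and deletions are supported in $\AC^0$ per step. Each of these is a proposition proved above, so the proof is essentially a matter of stitching them together and verifying that the per-step update cost is genuinely constant-depth.

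First I would invoke Proposition~\ref{tree-monoid} to obtain the initial tree data structure: from the input generating set $S$ we construct the balanced binary tree over all leaves $a \in M$, with the submonoid $M_\nu$ and all the ``reset'' submonoids $M_\nu:M_\mu$ precomputed at every node. That proposition also hands us the membership query bound: since the root submonoid $M_\rho = \angle{S}$ is stored explicitly as an $n$-bit list, testing $m \in \angle{S}$ is a single table lookup, hence in $\AC^0$. This settles (i) and (ii). I should emphasize that the initial construction is a one-time $\SAC^1$ preprocessing step, which is permitted in the dynamic model (it happens before any queries or updates arrive) and does not affect the $O(1)$-time guarantee for the ongoing updates.

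Next I would handle the dynamic maintenance, which is the substance of the theorem. For deletions, I appeal to the deletion procedure described above together with its $\AC^0$ bound: resetting the leaf's associated set to $\{1\}$, replacing each affected $M_\nu$ by the precomputed $M_\nu:P_a$, and recomputing each affected $M_\nu:M_\mu$ as the product $(M_\nu:M_{\mu'})\cdot(M_{\mu_1}:P_a)\cdot(M_{\mu_2}:M_\mu)$ of three precomputed submonoids. For insertions, I appeal symmetrically to Proposition~\ref{insertions=monoid}, where each affected $M_\nu$ becomes $M_\nu\cdot P_a$ and each affected $M_\nu:M_\mu$ becomes $(M_\nu:M_{\mu'})\cdot(M_{\mu_1}\cdot P_a)\cdot(M_{\mu_2}:M_\mu)$. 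The crucial observation that makes both operations $\AC^0$ is that every update, at every one of the polynomially many tree nodes, is computed \emph{in parallel} and involves only a \emph{constant} number of products of already-stored sets; by Proposition~\ref{prop:setMult} each such product is in $\SAC^0 \subseteq \AC^0$, and composing a constant number of them keeps the depth constant.

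The point I expect to require the most care is arguing that a single update step correctly transforms a valid data structure into another valid data structure, so that correctness is preserved across an arbitrary sequence of insertions and deletions. Concretely, I must check that after the update the invariant ``$M_\nu = \angle{L_\nu}$ for the current $S$, and $M_\nu:M_\mu$ equals the submonoid generated by the leaves under $\nu$ excluding those under $\mu$'' still holds --- this is exactly where the least-common-ancestor decomposition at $\mu'$ and its children $\mu_1, \mu_2$ does its work, relying on commutativity so that the three factors recombine to the correct generated submonoid. Since the recomputed value is expressed entirely through precomputed pieces that were themselves correct under the invariant, the decomposition is a purely algebraic identity among submonoids, and verifying it completes the proof. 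I would conclude that the data structure is maintained correctly with $O(1)$-depth updates and $O(1)$-depth queries, which is precisely the claimed deterministic $\DynAC^0$ algorithm.
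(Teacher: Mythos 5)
Your proposal matches the paper's own treatment: Theorem~\ref{thm:monoid} is stated there as a summary of exactly the pieces you cite --- Proposition~\ref{tree-monoid} for construction and $\AC^0$ queries, the deletion procedure, and Proposition~\ref{insertions=monoid} for insertions, each update being a constant number of parallel products of precomputed submonoids. Your additional care in checking that the LCA-based recombination identity preserves the invariant $M_\nu = \angle{L_\nu}$ (via commutativity) is exactly the correctness argument the paper leaves implicit, so the proof is correct and essentially identical in approach.
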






\section{The Dynamic $\CGM$ Problem for Abelian Groups}\label{sec4}

Let $G$ be an $n$-element abelian group given by its multiplication
table. Let $n=p_1^{a_1}\times p_2^{a_2}\times \cdots \times
p_\mu^{a_\mu}$ be its prime factorization, where $p_i$ are distinct
primes. By the structure of finite abelian groups \cite[Theorem
  3.3.1]{Hall} $G = G_1\times G_2\times \cdots G_\mu$ is a direct
product where $G_i$ is the $p_i$-Sylow subgroup of $G$.\footnote{Let
$G$ be a finite group of order $n=p_1^{a_1}\times p_2^{a_2}\times
\cdots \times p_\mu^{a_\mu}$. Then for each $i$, $G$ has at least one
subgroup of order $p_i^{a_i}$, which is known as a $p_i$-Sylow
subgroup of $G$. However, if $G$ is abelian then there is a unique
$p_i$-Sylow subgroup of $G$ which we can denote by $G_i$. In that
case, $G = G_1\times G_2\times \cdots G_\mu$.}

For a subset $S\subseteq G$ consider the subgroup $H=\angle{S}$
generated by $S$. Let $b_i=n/p_i^{a_i}, 1\le i\le \mu$ and $S_i =
\{x^{b_i}\mid x\in S\}, ~~1\le i\le \mu$.  Then $H_i=\angle{S_i}$ is a
subgroup of $G_i$ for each $i$ and
\[
H = H_1\times H_2\times \cdots \times H_\mu.
\]

An element $g\in G$ is in the subgroup $H$ if and only if $g^{b_i}\in
H_i$ for each $i$. As a consequence we can reduce the dynamic abelian
$\CGM$ problem to the dynamic abelian $\CGM$ problem for abelian
$p$-groups. We state this as a lemma.

\begin{lemma}
  Given an $n$-element abelian group $G$ by its Cayley table, by a
  one-time preprocessing computation in $\AC^1$ (or even polynomial
  time suffices for this purpose) we can compute Cayley tables for
  each Sylow subgroup $G_i$. Furthermore, all powers of elements of
  $G$ can be pre-computed and stored in an array. Hence the parallel
  dynamic complexity of abelian $\CGM$ maintaining $S$, supporting say
  $t(n)$ insertions/deletions at each step is $\AC^0$ reducible to the
  same problem for abelian $p$-groups.
\end{lemma}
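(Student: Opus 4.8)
The plan is to realize the reduction in two layers: a one-time static preprocessing that extracts the Sylow structure and tabulates all powers, and a per-step $\AC^0$ translation that routes the dynamic operations on $S$ to the corresponding operations on the sets $S_i$, using the direct product decomposition $G = G_1\times\cdots\times G_\mu$ already recorded above. I would begin with the preprocessing. Since $n$ has magnitude polynomial in the input size, factoring it is cheap: testing all candidate divisors $2,\dots,n$ for divisibility in parallel yields $n=\prod_i p_i^{a_i}$ and the numbers $b_i=n/p_i^{a_i}$, comfortably within polynomial time and in fact in $\AC^1$. To tabulate powers, I would first compute the repeated squares $g^{2^j}$ for $0\le j\le\log n$ by $\log n$ successive Cayley-table lookups, then assemble each $g^k$ as the product of the $O(\log n)$ squares selected by the binary digits of $k$ via a balanced tree of lookups; the dominating cost is the $O(\log n)$-depth squaring chain, so the entire $n\times n$ array of powers lands in $\AC^1$. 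From this array I read off each Sylow subgroup as the image $G_i=\{g^{b_i}:g\in G\}$ (equivalently, the elements of order dividing $p_i^{a_i}$) and take its Cayley table to be the restriction of that of $G$ to $G_i$.

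The correctness of the reduction rests on the decomposition recalled above: for every $g\in G$ one has $g\in\angle{S}$ iff $g^{b_i}\in H_i=\angle{S_i}$ for all $i$, where $S_i=\{x^{b_i}:x\in S\}$. Accordingly I would run, in parallel, one dynamic $\CGM$ instance over each $p$-group $G_i$ maintaining the generating set $S_i$, keeping $S$ itself alongside as a bit-vector over $G$. A membership query $g\in\angle{S}$ is answered by looking up $g^{b_i}$ in the precomputed table, querying each instance, and taking the conjunction of the $\mu\le\log n$ answers with a single unbounded-fanin $\mathrm{AND}$ gate; this post-processing is plainly in $\AC^0$.

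The step that needs genuine care—and the one I expect to be the main obstacle—is the translation of updates, because the map $x\mapsto x^{b_i}$ need not be injective on $S$: deleting a single $x$ from $S$ removes $x^{b_i}$ from $S_i$ only if $x$ was its last preimage in $S$, and an insertion is a genuine change to $S_i$ only if the image was previously absent. To settle this in constant parallel time I would, for each target value $z=x^{b_i}$ touched in the current step, scan the polynomially many current members of $S$ and test (via the precomputed powers) whether $z$ still retains—respectively already possesses—a preimage once the step's changes are applied; this is a bounded-depth disjunction, hence $\AC^0$. Because distinct elements of $G_i$ have disjoint preimages under $x\mapsto x^{b_i}$, a batch of at most $t(n)$ changes to $S$ triggers at most $t(n)$ changes to each $S_i$, so the per-step change budget is preserved across the reduction.

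Putting the pieces together, the static phase lies in $\AC^1$ (polynomial time amply suffices), while both the per-step routing of the $\le t(n)$ updates and the combination of query answers lie in $\AC^0$. This is exactly an $\AC^0$ reduction from dynamic abelian $\CGM$ over $G$, supporting $t(n)$ insertions and deletions per step, to the $\mu\le\log n$ parallel instances of dynamic abelian $\CGM$ over the $p$-groups $G_i$, as claimed.
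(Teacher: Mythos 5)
Your proposal is correct and follows essentially the same route as the paper: the Sylow decomposition $G=G_1\times\cdots\times G_\mu$, the sets $S_i=\{x^{b_i}\mid x\in S\}$, the criterion that $g\in\angle{S}$ iff $g^{b_i}\in\angle{S_i}$ for every $i$, power tables precomputed in $\AC^1$, and query answers combined by a single conjunction in $\AC^0$. Your extra care about the non-injectivity of $x\mapsto x^{b_i}$ when routing insertions/deletions (the $\AC^0$ preimage check) fills in a detail the paper leaves implicit, but it does not change the approach.
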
  

\paragraph{\textbf{Dynamic $\CGM$ for abelian $p$-groups}}

Let $G$ be an $n$-element abelian $p$-group given as input by its
multiplication table, where $n=p^m$. Let $S=\{g_1,g_2,\ldots,g_s\}$
be a subset of $G$. We want to maintain a structure that supports
efficient membership testing in the subgroup $H=\angle{S}$ generated
by $S$. That means efficiently supporting the following operations.

\begin{enumerate}
\item Given a query element $g\in G$ test if $g\in H=\angle{S}$ and,
  if so, express $g$ as a product of the generators in $S$.
\item The dynamic version of the model requires that we efficiently
  support insertions/deletions to set $S$. We have seen how to handle,
  even in the setting of commutative monoids, single insertions or
  deletions to $S$ at each step. More generally, we would like to
  handle bulk insertions and deletions at each step.
\end{enumerate}  

\paragraph{\textbf{Preprocessing for dynamic abelian $\CGM$}}

For the dynamic algorithm, we will first preprocess the abelian
$p$-group $G$ given by multiplication table. Let $|G|=p^m=n$. We will
first obtain a generating set of size at most $\log n$ for $G$ using
which we can represent all elements of $G$.

Let $G$ be a finite abelian group. An \emph{independent} generating
set \cite[Theorem 3.2.2]{Hall} for $G$ is a generating set
$\{g_1,g_2,\ldots,g_\ell\}$ such that $g_1^{e_1}\cdot g_2^{e_2}\cdots
g_\ell^{e_\ell}=1$ for $0\le e_i\le o(g_i)-1$ if and only if $e_i=0$,
where $o(g_i)$ is the order of $g_i$ for each $i$. As a consequence of
independence, every element $g\in G$ is \emph{uniquely expressible} as
$g=g_1^{e_1}\cdot g_2^{e_2}\cdots g_\ell^{e_\ell}$ for $0\le e_i\le
o(g_i)-1$. It is easy to see that $\ell\le \log|G|$. The next
proposition easily follows from \cite[Theorem 3.2.2]{Hall}.

  \begin{proposition}\label{ind-gen-find}
    Given as input a finite abelian group $G$ by its multiplication
    table, an independent generating set for $G$ (of size at most
    $\log |G|$) can be computed in $\AC(\log n)$.
   \end{proposition}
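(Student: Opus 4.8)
The plan is to compute the decomposition $G=\angle{g_1}\times\cdots\times\angle{g_\ell}$ guaranteed by the structure theorem via a greedy ``maximal order'' procedure, and to organise the computation so that it runs in $\AC(\log n)$. As a one-time preprocessing step I would compute, for every $g\in G$ and every $0\le i<n$, the power $g^i$, and from these the order $o(g)=\min\{i\ge1:g^i=1\}$ of every element. As already noted in the excerpt (the sets $P_a$ are computable from the multiplication table in $\L$, and $\L\subseteq\AC(\log n)$), this power/order table is available in $\AC(\log n)$; equivalently one obtains it by pointer-doubling the permutation $x\mapsto xg$. After this, group elements will be manipulated as subsets of $G$ encoded by bit-vectors of length $n$, so that membership in an already-constructed subgroup is a constant-depth table lookup.

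The greedy construction runs for $\lceil\log_2 n\rceil$ sequential stages, maintaining after stage $i$ an independent tuple $g_1,\dots,g_i$ together with the subgroup $H_i=\angle{g_1,\dots,g_i}$ stored explicitly as a bit-vector (with $H_0=\{1\}$). At stage $i+1$, assuming $H_i\ne G$, I would first compute for every $g\in G$ the order of its coset $o(gH_i)=\min\{k\ge1:g^k\in H_i\}$; since each test $g^k\in H_i$ is a lookup into the precomputed power table and the bit-vector for $H_i$, and the minimum is over $k<n$, this is $\AC^0$. Let $d$ be the maximum of $o(gH_i)$ over all $g$; by the structure theorem there is an element with $o(g)=o(gH_i)=d$, and I would select $g_{i+1}$ to be the first such element in a fixed ordering of $G$ (an $\AC^0$ selection). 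Finally I would update $H_{i+1}=H_i\cdot\angle{g_{i+1}}$, where $\angle{g_{i+1}}=\{g_{i+1}^t:0\le t<d\}$ is read off the power table; this is a single set product, in $\SAC^0\subseteq\AC^0$ by Proposition~\ref{prop:setMult}. Crucially, each stage has constant depth and uses no counting, so the $\lceil\log_2 n\rceil$ stages compose to total depth $O(\log n)$, i.e. $\AC(\log n)$; once $H_i=G$ the remaining stages produce no new generator.

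For correctness I would argue two invariants. Independence: because $g_{i+1}$ satisfies $o(g_{i+1})=o(g_{i+1}H_i)=d$, the cosets $g_{i+1}^0H_i,\dots,g_{i+1}^{d-1}H_i$ are pairwise distinct, so $\angle{g_{i+1}}\cap H_i=\{1\}$ and hence $H_{i+1}=H_i\times\angle{g_{i+1}}$; inductively $g_1,\dots,g_{i+1}$ is independent and every element of $H_{i+1}$ has a unique representation as a product of powers, which is exactly the independence condition in the statement. Generation: since $g_{i+1}\notin H_i$ its coset order is $d\ge2$, so $|H_{i+1}|=d\,|H_i|\ge2|H_i|$; the process therefore strictly grows $H_i$ and terminates with $H_\ell=G$ after $\ell\le\log_2 n$ stages, giving a generating set of the required size.

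The main obstacle is the selection step, namely establishing that a maximal coset order $d$ is always realised by an element $g$ with $o(g)=d$ (equal order in $G$, not merely in the quotient). This is precisely the lifting lemma underlying \cite[Theorem~3.2.2]{Hall}: if $g$ has maximal order in a finite abelian group $A$ then $\angle{g}$ is a direct factor and every coset of $A/\angle{g}$ has a preimage of equal order. What makes this applicable here is that $H_i$ is itself built by always choosing maximal order, so $H_i$ is a direct factor of $G$ and $G/H_i$ inherits the hypothesis; verifying this invariant carefully (rather than the complexity bookkeeping, which is routine given the precomputed power table) is where the real content lies, and it is exactly what ``easily follows from \cite[Theorem~3.2.2]{Hall}'' refers to.
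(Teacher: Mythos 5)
Your proposal is correct and is essentially the argument the paper points to: the paper gives no proof of this proposition at all, saying only that it ``easily follows from \cite[Theorem 3.2.2]{Hall}'', and your greedy maximal-coset-order construction --- with the lifting lemma guaranteeing that the maximal coset order in $G/H_i$ is attained by an element of equal order in $G$ (valid because each $H_i$ remains a direct factor), and the doubling argument bounding the number of stages by $\log_2 n$ --- is precisely the standard constructive proof of Hall's basis theorem, organized into $O(\log n)$ constant-depth stages. The complexity bookkeeping is likewise sound (power/order tables in $\AC(\log n)$ by parallel prefix, subgroup bit-vectors, $\AC^0$ coset-order computation, selection, and set products per stage), so the whole construction runs in $\AC(\log n)$ as claimed.
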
 

 Thus, in a one-time preprocessing step, we can compute an independent
 generating set $\{g_1,g_2,\ldots,g_\ell\}$ from the multiplication
 table of the input group $G$ as well as the unique expression for
 each $g\in G$ as $g=g_1^{e_1}\cdot g_2^{e_2}\cdots g_\ell^{e_\ell}$
 for $0\le e_i\le o(g_i)-1$. The two preprocessing steps for $G$ are
 summarized below.

\begin{enumerate}
\item For each pair $(g,i), g\in G, 0\le i\le n-1$, we compute and
  store the power $g^i$ in an $n\times n$ table. The product of two
  elements in the table is computable in $\AC^0$. This is a
  straightforward $\AC(\log n)$ computation. In particular, this
  computation also yields the order $o(g)$ of each $g\in G$.
\item By Proposition~\ref{ind-gen-find}, in $\AC(\log n)$ we compute
  for $G$ an independent generating set $T=\{g_1,g_2,\ldots,g_t\}$
  $t\le \log n$ and also a representation for each $g\in G$ as a
  product
  \[
  g=\prod_{i=1}^t g_i^{e_{gi}}.
  \]
\end{enumerate}

\begin{comment}
\begin{remark}
  We can actually do these preprocessing steps in $\AC(\log\log n)$
  using the Barrington et al strategy \cite{BKLM}. However, since the
  preprocessing is done only once, we do not require this more
  involved algorithm.
\end{remark}
\end{comment}

Additionally, we note the easy consequence of Barrington et al's
$\FOLL$ algorithm \cite{BKLM} for abelian $\CGM$.

\begin{lemma}\label{preproc}
  Let $S\subseteq G$, for an abelian group $G$ given by its
  multiplication table as input. In $\FOLL$ a $\log |G|$ size subset $T$
  of $S$ can be computed that generates the same subgroup as $S$.
\end{lemma}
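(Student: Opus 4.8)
The plan is to take $T$ to be the greedy ``left-to-right independent subset'' of $S$ with respect to the fixed ordering of $G$ induced by its multiplication table, but to compute all the greedy membership checks \emph{simultaneously} so that the depth stays at $\FOLL$. Concretely, fix the enumeration $g_1,g_2,\ldots,g_n$ of $G$ and declare that $s\in T$ exactly when $s\in S$ and $s\notin\angle{\{s'\in S:s'<s\}}$, where $<$ is the index order. For each candidate $s\in S$ the predicate to be decided is itself an instance of abelian $\CGM$: is $s$ a member of the subgroup generated by the elements of $S$ that precede it? There are at most $n$ such predicates, and each is decidable in $\FOLL$ by the Barrington et al \cite{BKLM} algorithm. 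Setting up each sub-instance (restricting the bit-vector of $S$ to the indices below a given one) and assembling the answers into the characteristic vector of $T$ are plainly $\AC^0$ operations, so running the $n$ tests in parallel and post-processing keeps the whole computation in $\FOLL$, with uniformity inherited from the Barrington et al circuits.

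I would then verify the two required properties. First, $\angle{T}=\angle{S}$: since $T\subseteq S$ one inclusion is immediate, and for the reverse I argue by induction along the ordering that every $s\in S$ lies in $\angle{T}$, because if $s\notin T$ then $s$ belongs to the subgroup generated by its predecessors in $S$, each of which is already in $\angle{T}$ by the induction hypothesis. Second, $|T|\le\log|G|$: listing the chosen elements as $t_1<t_2<\cdots<t_r$ and setting $H_j=\angle{t_1,\ldots,t_j}$, the defining condition forces $t_j\notin H_{j-1}$ (as $\{t_1,\ldots,t_{j-1}\}$ is contained in the set of predecessors of $t_j$ in $S$), so each $H_j$ strictly contains $H_{j-1}$ and by Lagrange's theorem $|H_j|\ge 2|H_{j-1}|$. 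Hence $2^r\le|H_r|\le|G|$, giving $r\le\log|G|$.

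The one point that genuinely needs care -- and where a naive argument would fail -- is the parallelization. The greedy selection is \emph{described} sequentially, yet its output is a fixed first-order predicate of $S$ and the group structure: the test for $s$ depends only on the elements preceding $s$, which are determined by the input and not by the outcomes of the other tests. This absence of real data dependence is exactly what lets us avoid the $\log|G|$ sequential rounds of membership testing (which would push the depth up to $\AC^1$) and instead pay only a single $\FOLL$ layer on top of $\AC^0$ bookkeeping. I expect this observation, rather than any of the routine group theory, to be the crux of the proof.
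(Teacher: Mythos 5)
Your proposal is correct and is essentially the paper's own proof: the paper likewise fixes an enumeration $x_1,\ldots,x_s$ of $S$, runs all the tests ``$x_{i+1}\in\angle{x_1,\ldots,x_i}$?'' in parallel via the $\FOLL$ algorithm of \cite{BKLM}, and keeps exactly those elements failing the test. The two verifications you spell out (the induction giving $\angle{T}=\angle{S}$, and the Lagrange doubling argument giving $|T|\le\log|G|$) are the steps the paper dismisses as ``clear,'' so your write-up is just a more detailed version of the same argument.
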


\begin{proof}
  Let $S=\{x_1,x_2,\ldots,x_s\}$. For each $i>1$ in parallel, we can
  check if $x_{i+1}$ is in $\angle{x_1,x_2,\ldots,x_i}$ using the
  $\FOLL$ algorithm of \cite{BKLM}. If $x_{i+1}\notin
  \angle{x_1,x_2,\ldots,x_i}$ then we include $x_{i+1}$ into the set
  $T$. Clearly, $|T|\le \log |G|$ and generates the same subgroup and
  $S$.
\end{proof}

\subsection{Randomized $\DynAC^0$ Algorithm for Abelian $\CGM$}

We first present a randomized $\DynAC^0$ algorithm for maintaining the
subgroup $H=\angle{S}$ of $G$, given by its generating set. More
precisely, the algorithm can process $\polylog(n)$ insertions and
deletions in each step and answer membership queries to the group
$\angle{S}$ in $O(1)$ parallel time (i.e. in $\AC^0$).
  
The next observation is key to the algorithm in this section.

\begin{lemma}\label{randac0}
  Let $T\subset G$ be of size at most $\log^cn$ for some constant $c$,
  where $|G|=n$ is an abelian group given by its multiplication table
  with independent generating set $G=\angle{g_1,g_2,\ldots,g_\ell}$.
  Then in randomized $\AC^0$ we can list out the subgroup $\angle{T}$
  generated by $T$. In particular, membership testing in the subgroup
  $T$ can be done in randomized $\AC^0$.
\end{lemma}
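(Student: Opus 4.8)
The plan is to list $\angle{T}$ by drawing polynomially many (near-)uniform random samples from $\angle{T}$ in parallel and collecting the distinct elements seen; a coupon-collector argument will then guarantee that with high probability every element of $\angle{T}$ is produced. The whole difficulty is thereby pushed into two $\AC^0$ primitives: (i) drawing a single (near-)uniform sample from $\angle{T}$, and (ii) recognizing, for each $g\in G$, whether $g$ coincides with one of the sampled elements. Primitive (ii) is just a disjunction over the (polynomially many) samples of equality tests between the names of group elements, which is clearly in $\AC^0$, so all the real work is in (i).

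For (i), write $T=\{t_1,\dots,t_k\}$ with $k\le\log^c n$. Since $G$ is abelian, the map $(e_1,\dots,e_k)\mapsto \prod_{i=1}^k t_i^{e_i}$ from $\prod_i \mathbb{Z}_{o(t_i)}$ onto $\angle{T}$ is a surjective homomorphism, so feeding it independent (near-)uniform exponents $e_i$ yields a (near-)uniform element of $\angle{T}$. Using $O(\log n)$ random bits per coordinate I would sample each $e_i$, look up $t_i^{e_i}$ in the precomputed power table, and then multiply the $k$ resulting elements. Here I exploit the independent generating set: every group element $h$ is stored as its coordinate vector $(a_1,\dots,a_\ell)$ with $h=\prod_{j=1}^\ell g_j^{a_j}$ and $\ell\le\log n$, and abelianness gives $\prod_{i=1}^k t_i^{e_i}=\prod_{j=1}^\ell g_j^{S_j}$ with $S_j=\sum_{i=1}^k a_{ij}$. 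Each $S_j$ is an iterated sum of $k=\polylog(n)$ numbers, each of $O(\log n)$ bits; reducing $S_j$ modulo $o(g_j)$ (since $a_{ij}<o(g_j)$ the quotient is at most $k$, so I can select it by comparing the $k+1$ candidate multiples $q\cdot o(g_j)$ against $S_j$) produces the coordinate vector of the product, which a single inverse table lookup turns back into the name of a group element.

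The key enabling fact — and the main obstacle — is that this iterated addition of $\polylog(n)$ integers of $O(\log n)$ bits each lies in (dlogtime-uniform) $\AC^0$. This is precisely the statement that adding $\polylog(n)$ unary numbers is in $\AC^0$: it rests on the known constant-depth counting/threshold circuits for a polylogarithmic number of inputs, which have polynomial size once the depth is chosen larger than the exponent $c$. Everything else in primitive (i) — sampling random bits, table lookups, a constant number of comparisons, and one subtraction of polynomially bounded binary numbers — is standard $\AC^0$.

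Finally, for correctness of the listing: each sample lands in $\angle{T}$, so there are no false positives, and by near-uniformity each fixed $y\in\angle{T}$ is hit by a single sample with probability $\Omega(1/|\angle{T}|)\ge\Omega(1/n)$. Drawing $N=n^2$ independent samples in parallel, the probability that $y$ is missed by all of them is at most $(1-\Omega(1/n))^{N}=e^{-\Omega(n)}$, and a union bound over the at most $n$ elements of $\angle{T}$ keeps the total failure probability negligible. Thus with high probability the collected set is exactly $\angle{T}$, from which membership of any queried $g$ is read off directly. Note that I never need the sampling distribution to be exactly uniform, only to have full support with each element carrying non-negligible probability, which sidesteps the usual annoyance of sampling uniformly from a range whose size is not a power of two.
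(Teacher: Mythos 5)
Your proposal is correct and takes essentially the same approach as the paper's proof: sample random exponent vectors, use the surjective homomorphism onto $\angle{T}$ to get (near-)uniform samples, draw $n^2$ samples in parallel with a coupon-collector/union bound, and compute each product in $\AC^0$ via the precomputed independent-generator coordinates and iterated addition of polylogarithmically many small numbers reduced modulo $o(g_j)$. The only differences are implementation details (the paper draws exponents uniformly from $[n]$ and sums the products $\beta_j\alpha_{ij}$ directly, while you draw near-uniform exponents modulo the orders and sum coordinates after a power-table lookup), and your explicit handling of sampling from ranges that are not powers of two is if anything slightly more careful than the paper's.
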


\begin{proof}
  Let $T=\{x_1,x_2,\ldots,x_r\}$. For each $g\in G$ we have the
  pre-computed unique product
  \[
  g=\prod_{i=1}^\ell g_i^{\alpha_i},
  \]
  using the independent generating set $\{g_1,g_2,\ldots,g_\ell\}$.
  In particular, for each $x_j\in T$ we have
  \[
  x_j=\prod_{i=1}^\ell g_i^{\alpha_{ij}},
  \]
  where $0\le \alpha_{ij} \le o(g_i)-1$ for each $i\in[\ell]$. As
  explained below, we can randomly sample from the group generated by
  $T$ by picking numbers $\beta_j\in_R[n], 1\le j\le r$ uniformly at
  random and computing the product
  \[
  x = \prod_{j=1}^rx_j^{\beta_j}.
  \]
  The number of such products is $n^r$. Furthermore, each element of
  the subgroup $\angle{T}$ occurs in this product with multiplicity
  exactly $|\{(\beta_1,\beta_2,\ldots,\beta_r)\mid \prod_j
  x_j^{\beta_j}=1\}$, as this set is the kernel of the group
  homomorphism mapping $(\beta_1,\beta_2,\ldots,\beta_r)\mapsto
  \prod_{j=1}^rx_j^{\beta_j}$. Thus, $x$ is uniformly distributed in
  $\angle{T}$. If we draw, say $n^2$ such samples $x$ in parallel, the
  probability that all elements of $\angle{T}$ appear is at least
  $1-e^{-n}$. Finally, we analyze the complexity of computing the
  product $x=\prod_{j=1}^rx_j^{\beta_j}$. Notice that it amounts to
  computing the product $\prod_{i=1}^\ell g_i^{\sum_{j=1}^r\beta_j
    \alpha_{ij}}$. Now, each of these $\ell$ exponents
  ${\sum_{j=1}^r\beta_j \alpha_{ij}}$ is a $\log^c n$ sum of
  \emph{unary} numbers and can be computed modulo the unary number
  $o(g_i)$ in $\AC^0$. The final product can be looked up in the
  pre-computed table to find $x$. This proves that the group
  $\angle{T}$ can be listed in randomized $\AC^0$ and hence membership
  testing in $\angle{T}$ is also in randomized $\AC^0$.
\end{proof}

\paragraph{\textbf{Statically created data structure}}

First we will create the tree-based data structure to represent the
generating set $S$ and certain subgroups of $\angle{S}$. This data
structure we will continually update as we process the bulk insertions
and deletions that occur at each time step.

\begin{enumerate}
\item Let $S$ be the current generating set and $2^{k-1}<|S|\le 2^k$,
  for positive integer $k\le \lceil \log n\rceil$. To begin with, we create an
  $O(\log n)$ depth \emph{full binary tree} with $2^k$ leaves.  The
  tree will have $2^k$ leaves. There are $|S|$ leaves, one for each
  generator $x\in S$. The corresponding leaf of the tree is labeled by
  the \emph{cyclic subgroup} $\angle{x}$ generated by $x$. The
  remaining leaves are labeled by the trivial subgroup $1$. As it is a
  full binary tree, its nodes can be indexed by $1,2,\ldots,2^{k+1}-1$
  with $1$ as index for the root. For each $i>1$, the node indexed $i$
  has as parent the node indexed $\lfloor i/2 \rfloor$.
  
\item Let $v$ be an internal node of the tree with children $u$ and
  $w$. We inductively compute at $v$ the subgroup $H_v=H_u\cdot H_w$
  which is the product of the subgroups $H_u$ and $H_w$ computed at
  the two children. Notice that the product $H_uH_w$ is indeed a
  subgroup of $G$ as $G$ is abelian. Letting $S_v$ denote the set of
  leaves below node $v$. Then, notice that $H_v=\angle{S_v}$ for each
  node $v$ of the tree. The root is labeled with $H=\angle{S}$.
\item Additionally, for each internal node $v$ and for each
  \emph{descendant} $u$ of $v$ we keep the subgroup generated by
  $S_v\setminus S_u$, which we denote by $H_v:H_u$. Thus, at each node
  $v$ we have the list of subgroups $H_v:H_u$ with generating set
  $S_v\setminus S_u$, one for each descendant $u$ of $v$.
\item Finally, using Lemma~\ref{preproc} we compute, in $\FOLL$, $\log
 n$ size generating sets $T_{vu}\subset S_v\setminus S_u$ for each
  subgroup $H_v:H_u$ at each node $v$ in parallel.

\item For access to the data maintained at each node in the tree, we
  will have an array of pointers indexed by $1,\ldots,2^{k+1}-1$.
  Furthermore, we will keep a boolean array $A[i,j], 1\le i,j\le
  2^{k+1}-1$ where $A[i,j]=1$ if and only if $i$ is an ancestor of
  $j$. 
  
\end{enumerate}

The following lemma is immediate.

\begin{lemma}
  The data structure for $S$ as described above can be built in $O(\log n)$
  parallel time (i.e. in $\AC(\log n)$). 
\end{lemma}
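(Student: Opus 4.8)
The plan is to verify that each component of the data structure can be produced within $\AC(\log n)$, and to isolate the bottom-up computation of the node subgroups as the one part that genuinely needs logarithmic depth; every other component is built in $\AC(\log\log n)$ or $\AC^0$ once the node subgroups are available. First I would handle the tree skeleton and the leaf labels. The full binary tree on $2^k \le 2n$ leaves is indexed in heap order by $1,\dots,2^{k+1}-1$, so its parent/child and adjacency relations are computed in $\AC^0$ by arithmetic on the $O(\log n)$-bit indices. For a leaf corresponding to a generator $x\in S$, its label $\angle{x}$ is exactly the power set of $x$, available by a lookup in the precomputed power table, while the remaining leaves carry the trivial subgroup $1$. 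Hence this step is in $\AC^0$.

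Next I would compute the node subgroups $H_v$. Proceeding level by level from the leaves to the root, for each internal node $v$ with children $u,w$ I set $H_v = H_u\cdot H_w$, a product of two subsets of $G$ computable in $\SAC^0\subseteq\AC^0$ by Proposition~\ref{prop:setMult}; since $G$ is abelian this product is again a subgroup, and an easy induction gives $H_v=\angle{S_v}$. The tree has depth $O(\log n)$ and each level is a single $\AC^0$ layer, so all the $H_v$ are produced in $\AC(\log n)$. I expect this to be the main, indeed the only, source of logarithmic depth, so it is the step that pins down the overall bound.

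For the difference subgroups $H_v:H_u$ I would use the structural observation that, walking from $v$ down to its descendant $u$, at each node on the path exactly one child lies on the path while the other child's subtree lies entirely outside $S_u$. Thus $S_v\setminus S_u$ is the disjoint union of the leaf sets of the $O(\log n)$ off-path sibling subtrees, and consequently $H_v:H_u=\prod_z H_z$ over these off-path siblings $z$, a product of at most $O(\log n)$ subgroups already computed in the previous step. Multiplying $O(\log n)$ precomputed subgroups along a balanced product tree takes $\AC(\log\log n)$, each layer being an $\SAC^0$ set product, and all $O(n^2)$ pairs $(v,u)$ are processed in parallel; so this step is in $\AC(\log\log n)\subseteq\AC(\log n)$. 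This matches the $\SAC^1$ construction already recorded in Proposition~\ref{tree-monoid}.

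Finally, the size-$\log n$ generating sets $T_{vu}$ are obtained for all pairs in parallel by Lemma~\ref{preproc}, which runs in $\FOLL=\AC(\log\log n)$; and the ancestor table $A[i,j]$ is filled in $\AC^0$, since in heap indexing $i$ is an ancestor of $j$ exactly when the binary representation of $i$ is a prefix of that of $j$, a condition checkable in constant depth on $O(\log n)$-bit indices. Combining the components, whose depths are respectively $\AC^0$, $\AC(\log n)$, $\AC(\log\log n)$, $\AC(\log\log n)$ and $\AC^0$, the total is $O(\log n)$, with the bottom-up accumulation of the $H_v$ the only part requiring the full logarithmic depth. The main obstacle to watch is precisely the decomposition used in the third step: without expressing each $H_v:H_u$ as a short product of already-available node subgroups, a naive re-accumulation from the leaves for each of the $O(n^2)$ pairs would threaten to add depth, so establishing that off-path siblings suffice is what keeps this step shallow.
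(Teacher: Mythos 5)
Your proof is correct, and its core---computing the node subgroups $H_v$ bottom-up, one $\AC^0$ set-product layer per level of the $O(\log n)$-depth tree---is exactly the paper's (very terse) argument. Where you genuinely depart from the paper is in the difference subgroups $H_v:H_u$: you finish the bottom-up pass first, then write $S_v\setminus S_u$ as the disjoint union of the leaf sets of the $O(\log n)$ off-path sibling subtrees on the path from $v$ down to $u$, and multiply the corresponding already-computed $H_z$ in a balanced product tree of depth $O(\log\log n)$. The paper, following the construction of Proposition~\ref{tree-monoid} in the monoid section, instead runs a separate bottom-up accumulation for \emph{every} pair $(v,u)$ in parallel, with the subtree of $u$ replaced by the trivial subgroup; equivalently, it propagates the recurrence $H_v:H_\mu=(H_{u_1}:H_\mu)\cdot H_{u_2}$ (where $u_1$ is the child of $v$ above $\mu$) up the tree alongside the $H_v$'s. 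Note that your off-path-sibling product is precisely the unrolling of this recurrence, so the two decompositions are mathematically the same; the difference is purely in how the circuit is organized. This also means your closing worry---that a ``naive re-accumulation from the leaves for each of the $O(n^2)$ pairs would threaten to add depth''---is unfounded: those per-pair accumulations are mutually independent and run side by side, so they cost (polynomial) size, not depth, and stay within $\AC(\log n)$. What your variant buys is a cleaner accounting---the difference subgroups become an $\AC(\log\log n)$ post-processing step, isolating the $H_v$ computation as the unique $\Theta(\log n)$-depth bottleneck---at the price of needing the off-path decomposition, which the paper's simpler scheme never has to state. Your handling of the remaining components (leaf labels from the precomputed power table, the $T_{vu}$ via Lemma~\ref{preproc} in $\FOLL$, and the prefix test for the heap-order ancestor table) is correct and consistent with the paper.
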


\begin{proof}
  The tree has $\log n$ levels and the straightforward computation
  required at each of the (at most $|S|$) nodes at each level is
  $\AC^0$.
\end{proof}  

\paragraph{\textbf{Handling bulk insertions and deletions}}

At any point of time during the computation, the current generating
set is maintained as a data structure $S$, described above, along with
two sets $I$ and $D$ such that $|I|,|D|=O(\log^{c+1}n)$, where the
actual generating set is $S\cup I\setminus D$. We will first show that
a membership query occurring at this point of time can be answered in
$O(1)$ parallel time.

\begin{lemma}\label{polylog1}
  Given the data structure for $S$ along with the update sets of
  insertions $I$ and deletions $D$, each of $\log^{c+1}n$ size, we can
  test if some group element $g$ is in the subgroup generated by
  $S\cup I\setminus D$ in $\AC^0$.
\end{lemma}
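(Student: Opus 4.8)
The plan is to reduce the membership test to producing a $\polylog(n)$-size generating set for the subgroup $\angle{(S\setminus D)\cup I}$ and then invoking Lemma~\ref{randac0}. Since $\angle{S\cup I\setminus D}=\angle{(S\setminus D)\cup I}$, it suffices to exhibit such a small generating set and then list the entire subgroup (hence decide whether $g$ lies in it) in randomized $\AC^0$. The insertion set $I$ already has size $O(\log^{c+1}n)$, so the only real work is to extract a small generating set for $\angle{S\setminus D}$ from the tree data structure, exploiting that only the $O(\log^{c+1}n)$ leaves in $D$ have been removed.

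To handle the deletions I would use the fact that the retained leaves $S\setminus D$ are covered by few ``clean'' subtrees. Let $P$ be the set of tree nodes that are ancestors of some deleted leaf $d\in D$ (the Steiner tree of $D$ together with the root). Call $w$ the root of a \emph{hanging subtree} if $w\notin P$ but its parent $\lfloor\mathrm{index}(w)/2\rfloor$ lies in $P$. Every such subtree contains no deleted leaf, so $H_w=\angle{S_w}$ is retained in full, and the union of the leaf sets $S_w$ over all hanging roots is exactly $S\setminus D$. Consequently $\angle{S\setminus D}=\angle{\bigcup_w T_w}$, where $T_w$ is the $\log n$-size generating set for $H_w$ that we pre-store exactly as in the fourth item of the construction via Lemma~\ref{preproc}. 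Since $|P|\le |D|\cdot(\log n+1)=O(\log^{c+2}n)$ and each node of $P$ spawns at most two hanging subtrees, there are only $O(\log^{c+2}n)$ roots $w$, so $\bigcup_w T_w\cup I$ generates $\angle{(S\setminus D)\cup I}$ and has size $O(\log^{c+3}n)$, which is $\polylog(n)$ and thus fits the hypothesis of Lemma~\ref{randac0} (whose constant is arbitrary).

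Each step above is carried out in $\AC^0$ using the stored arrays. Membership of a node in $P$ is the disjunction $\bigvee_{d\in D}A[\,\cdot\,,\mathrm{leaf}(d)]$ over the $O(\log^{c+1}n)$ deleted leaves, an unbounded-fanin OR of polylogarithmically many pre-computed ancestor bits; the hanging-subtree test additionally reads the parent index, which is immediate for a full binary tree. Gathering the $O(\log^{c+3}n)$ generators $\bigcup_w T_w\cup I$ into one list is a selection from the pointer array, again $\AC^0$. Feeding this $\polylog(n)$-size generating set to Lemma~\ref{randac0} lists $\angle{(S\setminus D)\cup I}$, and a final table lookup decides whether $g$ lies in it, all within randomized $\AC^0$. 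Note that the subgroups $H_v:H_u$ are not needed for this query; they serve the update/rebuilding of the structure over time, whereas the membership query relies only on the $H_w$ and their small generating sets.

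I expect the main obstacle to be the decomposition step: arguing that the retained leaves are covered by the hanging subtrees and that their number stays polylogarithmic. Correctness hinges on the observation that a child of a $P$-node lying outside $P$ can have no deleted descendant, so its whole subtree is retained and $H_w=\angle{S_w}$ is valid; the size bound relies on the depth $O(\log n)$ of the tree together with $|D|=O(\log^{c+1}n)$. One must also confirm that every retained leaf is accounted for, so that $\angle{\bigcup_w S_w}$ is genuinely $\angle{S\setminus D}$ and not a proper subgroup — note that over-counting is harmless, since it affects only the generating set and not the generated subgroup, so a cover by clean subtrees suffices and disjointness is not required.
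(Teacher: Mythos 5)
Your proof is correct and follows essentially the same route as the paper: the paper likewise decomposes the retained generators into the maximal deletion-free subtrees hanging off the ancestors of the deleted leaves, bounds their number by $O(|D|\log n)$, takes the union of their pre-stored $\log n$-size generating sets together with $I$, and feeds the resulting $\polylog(n)$-size set to Lemma~\ref{randac0}. The only difference is that the paper additionally includes the generating sets $T_j$ of the subgroups $H_{v_{i_j}}:\angle{x_{i_j}}$ for the maximal subtrees containing exactly one deletion each; as you correctly observe, these are redundant for answering the query, since the clean subtrees already cover every retained leaf.
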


\begin{proof}
Let $D=\{x_{i_1},x_{i_2},\ldots,x_{i_d}\}$ be the deletions from $S$.
From the data structure for $S$ we can find the $d$ subtrees rooted at
nodes $v_{i_1},v_{i_2},\ldots,v_{i_d}$ where each node $v_{i_j}$ is
root of the maximal subtree that has exactly the one deletion
$x_{i_j}$ occurring among its leaves. This can be done in $O(1)$
parallel time using the ancestor boolean array $A[u,v]$. In the data
structure for $S$ we already have a $\log n$ size generating set, say
$T_j$, for each subgroup $H_{v_{i_j}}:\angle{x_{i_j}}$.

Additionally, we find the \emph{maximal subtrees}, rooted at nodes
$u_{\ell_1},u_{\ell_2},\ldots,u_{\ell_s}$ of the tree, such that these subtrees
contain no deletion $x_{i_j}$ as descendant. Each such maximal subtree
has as sibling a subtree that contains one or more deletions among its
leaves. The roots of all subtrees that have deletions among its leaves
are just the ancestors of the $v_{i_j}$ nodes, and hence are at most
$d\lceil\log n\rceil$ in number (each leaf has $k$ ancestors). Thus,
$s=O(d\log n)=O(\log ^{c+2}n)$.  At each node $u_{\ell_k}$ we already
have a $\log n$ size generating set, say $\hat{T}_k$ for each subgroup
$H_{u_{\ell_k}}$. Let $T=\cup_j T_j$ and $\hat{T}=\cup_k \hat{T}_k$. It
follows from the above that we can compute $T$ and $\hat{T}$ in $O(1)$
parallel time.

Putting it all together, the group generated by $S\cup I\setminus D$
is actually generated by $\hat{T}\cup T\cup I$ which is of
$\polylog(n)$ size. Therefore, applying Lemma~\ref{randac0} we can do
membership testing in this subgroup in randomized $\AC^0$.
\end{proof}

\paragraph{\textbf{Continual rebuilding of the data structure for $S$}}

This is the crucial part of the dynamic algorithm.\footnote{This
continual rebuilding of data structure is essentially the muddling
technique introduced in \cite{DMSVZ19}.}  Lemma~\ref{polylog1} shows
that we can handle membership queries in $O(1)$ parallel time using
the data structure for $S$ provided the sets $I$ and $D$ are
$\polylog(n)$ size bounded. Specifically, suppose at each time instant
there are $\log^cn$ many insertions/deletions. Then in $\log n$ time
steps, the sets $I$ and $D$ grow to size bounded by $\log^{c+1}n$. At
this point we can rebuild the static data structure in $\AC(\log n)$,
i.e. $\log n$ parallel time for the current generating set $S_1=S\cup
I\setminus D$. Crucially, during this $\log n$ time window fresh bulk
insertions/deletions, say $I_1$ and $D_1$, will arrive. Therefore, at
the end of the $\log n$ time window we have the data structure for
$S_1$ along with the $\log^{c+1}n$ size subsets $I_1$ and $D_1$ and we
can answer membership queries in $O(1)$ time using
Lemma~\ref{polylog1}.

To summarize, we have shown the following theorem.

\begin{theorem}\label{thm1}
  There is a randomized $\DynAC^0$ algorithm for the abelian Cayley
  Group Membership problem, $\CGM$, that supports $\polylog(n)$
  insertions and deletions to the generating set.
\end{theorem}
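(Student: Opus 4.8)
The plan is to assemble the pieces developed above into a single dynamic algorithm via the \emph{muddling} paradigm of \cite{DMSVZ19}. After the one-time preprocessing that computes the power table, an independent generating set $\{g_1,\ldots,g_\ell\}$ of $G$, and the unique representation of every $g\in G$ in that generating set, I would maintain at all times a \emph{frozen} generating set $S$ together with its full static tree data structure — the subgroups $H_v$ at each node, the difference subgroups $H_v:H_u$ for each descendant $u$, and the $\log n$-size generating sets $T_{vu}$ for them — plus two accumulating update sets $I$ and $D$ that record the insertions and deletions that have arrived since $S$ was last frozen. The current effective generating set is always $S\cup I\setminus D$, and a membership query $g\in\angle{S\cup I\setminus D}$ is answered in randomized $\AC^0$ by invoking Lemma~\ref{polylog1}, which is applicable precisely so long as $|I|$ and $|D|$ stay within the $\log^{c+1}n$ budget that lemma requires.

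The heart of the argument is the timing analysis that keeps $I$ and $D$ within budget. I would partition the timeline into consecutive windows of $\log n$ steps. At each step at most $\log^c n$ insertions/deletions arrive, so over one full window the sets $I$ and $D$ grow to size at most $\log^{c+1}n$, which is exactly the bound tolerated by Lemma~\ref{polylog1}. At the start of each window I begin rebuilding the static data structure from scratch for the then-current effective set $S\cup I\setminus D$; by the construction lemma this rebuild runs in $\AC(\log n)$, i.e. in $\log n$ parallel time, so it completes by the end of that same window. At that instant the freshly built structure becomes the new frozen set $S_1$, the old $I$ and $D$ are discarded, and the $\le\log^{c+1}n$ changes that arrived \emph{during} the rebuild form the new update sets $I_1$ and $D_1$. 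Thus at every time instant we hold a completed frozen structure together with update sets of size $\le\log^{c+1}n$, and queries are answered in $O(1)$ parallel time throughout.

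The step I expect to be the main obstacle is the concurrency between query answering and rebuilding: queries must continue to be served against the \emph{previous} frozen structure augmented by the accumulated changes while the next structure is still under construction, and the accounting must guarantee that the update sets never overflow the $\polylog(n)$ bound before the pending rebuild finishes. The muddling framework resolves this because the rebuild latency ($\log n$) matches the window length, so one pipelined rebuild thread always completes in time to replace the structure before the budget is exhausted. A secondary point to verify is the randomized error: each application of Lemma~\ref{randac0} draws $n^2$ independent samples and fails to list the relevant $\polylog(n)$-generated subgroup with probability at most $e^{-n}$, so a union bound over the polynomially many subgroups touched by a single query, and over polynomially many queries, keeps the total failure probability exponentially small, which is exactly what a randomized $\DynAC^0$ algorithm demands.
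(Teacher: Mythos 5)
Your proposal is correct and follows essentially the same route as the paper: the same statically built tree data structure with difference subgroups, queries answered via Lemma~\ref{polylog1}, and the muddling schedule with $\log n$-length windows in which the $\AC(\log n)$ rebuild of the structure for $S\cup I\setminus D$ completes exactly as the update sets reach the $\log^{c+1}n$ budget. Your explicit union bound over subgroups and queries for the randomized error is a small addition the paper leaves implicit, but it does not change the argument.
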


\section{A Deterministic Dynamic Algorithm for Abelian $\CGM$}\label{sec4a}

We now present a \emph{deterministic} $\DynAC^0$ algorithm for abelian
$\CGM$ that can process bulk insertions/deletions of size $t =
O({\frac{\log n}{\log\log n}})$. The algorithm is linear
algebraic. Recall that we have pre-computed an independent generating
set $\{g_1,g_2,\ldots,g_\ell\}$ for the abelian $p$-group $G$.  Let
$|G|=n=p^m$ and $o(g_i)=p^{m_i}$ for each $i$, where $m_1+m_2+\cdots +
m_\ell=m$, and $\ell\le \log n$. We also have $m=\log_p n\le \log
n$. Each $g\in G$ has a unique representation (pre-computed) as
\[
g=\prod_{i=1}^\ell g_i^{b_i},~~~ 0\le b_i\le p^{m_i}-1.
\]
Thus $g$ can also be represented as an $\ell$-dimensional integer
column vector $\bar{b}$.

We will dynamically maintain a subset $T$ of the generating set $S$
such that $|T|\le \log n$ and $\angle{S}=\angle{T}$.  As explained, we
will represent elements of $T$ $\ell$-dimensional column
vectors. Thus, $g\in \angle{T}$ iff the system of integer linear
equations
\[
Ax=\bar{b}
\]
is feasible, where the matrix $A$ has columns corresponding to each
generator in $T$, and the $i^{th}$ row of the system of equations is
computed modulo $p^{m_i}$ for $1\le i\le \ell$. We can suitably scale
each equation to get a system of integer linear equations modulo
$p^m$. Since $p^m$ is a composite for $m>1$, the usual recipe for
feasibility of linear equations based on matrix rank does not directly
apply. However, using some basic linear algebra we have the following.
We can rewrite $Ax=\bar{b}~(\Mod p^m)$ equivalently as integer linear
equations $Ax+p^my=\bar{b}$, where $y$ is a column vector of $\ell$
new variables. Letting $[A|p^mI_\ell]=\tilde{A}$, and $z=[x|y]^T$ we
can write this as $\tilde{A}z=\bar{b}$, noting that $\tilde{A}$ is
full row rank. Then we have

\begin{lemma}{\rm\cite[Theorem 3.13]{AV09}}\label{av-lem}
  For a prime $p$, the system $\tilde{A}z=\bar{b}$ (and hence
  $Ax=\bar{b}~(\Mod p^m)$) is feasible iff GCD of the $\ell\times
  \ell$ subdeterminants of $\tilde{A}$ and the GCD of the $\ell\times
  \ell$ subdeterminants of the augmented matrix $[\tilde{A}|\bar{b}]$
  have the same highest power of $p$ dividing them both.
\end{lemma}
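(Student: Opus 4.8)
The plan is to prove the lemma through the Smith Normal Form (SNF) of $\tilde{A}$, together with the elementary fact that the gcd of all $k\times k$ minors of an integer matrix is a unimodular invariant equal to the product $d_1\cdots d_k$ of its first $k$ invariant factors. I would write $\tilde{A}=UDV$ with $U\in GL_\ell(\mathbb{Z})$, $V$ unimodular, and $D=[\,\mathrm{diag}(d_1,\dots,d_\ell)\mid 0\,]$, where $d_1\mid\cdots\mid d_\ell$ are the invariant factors (all nonzero, since $\tilde{A}$ has full row rank $\ell$). Under the unimodular change of variables $w=Vz$ and $c=U^{-1}\bar b\in\mathbb{Z}^\ell$, the system $\tilde A z=\bar b$ becomes $Dw=c$, which is integer-solvable iff $d_i\mid c_i$ for every $i$; equivalently, $\tilde A z=\bar b$ is feasible over $\mathbb{Z}$ iff $\bar b$ lies in the full-rank image lattice $L=\operatorname{im}_{\mathbb{Z}}(\tilde A)\subseteq\mathbb{Z}^\ell$.

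Next I would pass to a lattice-index formulation. Let $L'=\operatorname{im}_{\mathbb{Z}}([\tilde A\mid\bar b])$ be the lattice generated by the columns of the augmented matrix. Since appending the column $\bar b$ can only enlarge the integer span, $L\subseteq L'$, and both are full-rank sublattices of $\mathbb{Z}^\ell$. Feasibility, i.e. $\bar b\in L$, holds iff $L=L'$, which — because $L\subseteq L'$ — is equivalent to the equality of indices $[\mathbb{Z}^\ell:L]=[\mathbb{Z}^\ell:L']$. By the SNF fact above, $[\mathbb{Z}^\ell:L]=\prod_i d_i$ equals the gcd of the $\ell\times\ell$ minors of $\tilde A$, and likewise $[\mathbb{Z}^\ell:L']$ equals the gcd of the $\ell\times\ell$ minors of $[\tilde A\mid\bar b]$. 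Thus feasibility is equivalent to the equality of these two gcds.

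The final step is to explain why this equality is detectable through the single prime $p$, which is where the special structure $\tilde A=[A\mid p^mI_\ell]$ enters. Both $\tilde A$ and $[\tilde A\mid\bar b]$ contain the $\ell$ columns $p^m e_1,\dots,p^m e_\ell$, so each has the particular $\ell\times\ell$ minor $\det(p^m I_\ell)=p^{m\ell}$. Hence the gcd of the $\ell\times\ell$ minors of either matrix divides $p^{m\ell}$ and is therefore a pure power of $p$. For two pure powers of $p$, being equal is the same as having the same highest power of $p$ dividing them, so the equality-of-gcds criterion is exactly the stated condition, and feasibility of $\tilde A z=\bar b$ (hence of $Ax\equiv\bar b \pmod{p^m}$) follows.

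The step I expect to be the main obstacle is the clean identification ``gcd of $\ell\times\ell$ minors $=$ index of the image lattice'': this rests on the classical argument that the gcd of $k\times k$ minors is invariant under left and right multiplication by unimodular matrices (a Cauchy--Binet computation), after which the value is immediate for the diagonal matrix $D$. Everything else — the lattice-containment reasoning and the purity of the $p$-power forced by the $p^m I_\ell$ block — is routine once this invariance is in hand. Since the statement is quoted from \cite{AV09}, I would present the above as a short self-contained justification rather than reprove the minor/invariant-factor identity in full.
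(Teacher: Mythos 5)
Your proof is correct, but there is nothing in the paper to compare it against: the paper does not prove this lemma at all, it imports it verbatim as a black box from \cite[Theorem 3.13]{AV09}. Your Smith-normal-form argument is a valid self-contained replacement. The chain of equivalences is sound: feasibility of $\tilde{A}z=\bar{b}$ over $\mathbb{Z}$ is equivalent to $\bar{b}$ lying in the column lattice $L$ of $\tilde{A}$; the gcd of $\ell\times\ell$ minors equals $[\mathbb{Z}^\ell:L]$ by Cauchy--Binet invariance under unimodular multiplication plus the diagonal case; and for nested full-rank lattices $L\subseteq L'$, equality of indices forces $L=L'$. Two points deserve emphasis as the places where your argument does real work beyond quoting classical facts. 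First, full row rank of $\tilde{A}$ (hence finiteness of both indices) is not an extra hypothesis you need to assume --- it follows from the $p^mI_\ell$ block, which you implicitly use. Second, and more importantly, the lemma as stated in the paper does not say ``the two gcds are equal'' (the classical Heger-type criterion your SNF argument directly yields) but rather ``the two gcds have the same highest power of $p$ dividing them''; your observation that both matrices contain the minor $\det(p^mI_\ell)=p^{m\ell}$, so that both gcds are pure powers of $p$ and the two formulations coincide, is exactly the bridge needed to recover the quoted statement, and it is the only step that uses the specific structure $\tilde{A}=[A\mid p^mI_\ell]$ rather than general integer linear algebra. In short: the paper buys brevity by citation; your write-up buys self-containedness at the cost of invoking (but reasonably not reproving) the minor/invariant-factor identity.
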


We will also require the following lemma.

\begin{lemma}\label{lem:detPolylogInFOLL}
Let $A$ be a square matrix of dimension $\polylog(n)$ with entries
that are polynomially bounded in $n$, then $\det(A)$ can be computed
in $\AC(\log\log n)$.
\end{lemma}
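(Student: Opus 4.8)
The plan is to reduce the computation of $\det(A)$ to an iterated product of small matrices, and then to observe that, because \emph{every} integer arising throughout this computation has only $\polylog(n)$ bits, each arithmetic operation needed lies in $\AC^0$; the only genuine source of depth is then the $O(\log N)=O(\log\log n)$ levels of a multiplication tree, where $N=\polylog(n)$ is the dimension.

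First I would isolate the key arithmetic sub-claim: adding $\polylog(n)$ integers of $\polylog(n)$ bits, and multiplying two integers of $\polylog(n)$ bits, are both in (dlogtime-uniform) $\AC^0$. The justification has two parts. (i) As functions of their $\polylog(n)$-bit operands these operations lie in $\mathrm{TC}^0$, i.e. are computed by constant-depth threshold circuits with $\mathrm{poly}(\polylog(n))$ gates, each gate a threshold gate of fan-in $\polylog(n)$. (ii) A threshold (majority/counting) function on $t$ bits admits, for each constant depth $d$, an $\AC^0$ circuit of size $2^{O(t^{1/(d-1)})}$; taking $t=\polylog(n)$ and $d$ a sufficiently large constant gives size $2^{O(\log n)}=\mathrm{poly}(n)$ and constant depth. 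Substituting such a sub-circuit for each threshold gate of the $\mathrm{TC}^0$ circuit simulates the whole operation in constant depth and total size $\mathrm{poly}(\polylog(n))\cdot\mathrm{poly}(n)=\mathrm{poly}(n)$, hence in $\AC^0$. I would state this as a stand-alone claim since it is the crux.

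Next I would invoke a standard reduction of the determinant to iterated matrix multiplication. By the Mahajan--Vinay (clow-sequence) formulation, $\det(A)$ is a designated entry of a product $B_1B_2\cdots B_N$ of $N$ matrices, each of dimension $N'=O(N^2)=\polylog(n)$, whose entries lie in $\{0,\pm a_{ij}\}$ and hence have $O(\log n)$ bits. I would evaluate this product by a balanced binary tree (equivalently by repeated squaring), using $\lceil\log N\rceil=O(\log\log n)$ successive matrix multiplications. A bit-length bound keeps everything in range: an entry of a partial product of $2^{j}$ of the $B_i$ has absolute value at most $(N'\cdot 2^{O(\log n)})^{2^{j}}$, so it has $O(2^{j}\log n)=\polylog(n)$ bits, since $2^{j}\le N=\polylog(n)$. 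Thus all integers manipulated anywhere in the tree remain of $\polylog(n)$ bits.

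Finally I would assemble the depth bound. A single multiplication of $N'\times N'$ matrices forms each output entry as an iterated sum of $N'$ pairwise products of $\polylog(n)$-bit integers; by the sub-claim both the pairwise products and the length-$N'$ summation (together with carry handling and signs for two's-complement integers, which reduce to the same threshold computations) are in $\AC^0$. Hence each of the $O(\log\log n)$ tree levels is a constant-depth $\mathrm{poly}(n)$-size sub-circuit, the whole evaluation lies in $\AC(\log\log n)$, and reading off the designated entry yields $\det(A)$. The hard part, and the place to be careful, is the arithmetic sub-claim: it is essential that the operands carry only $\polylog(n)$ bits, so the threshold functions involved are over $\polylog(n)$ bits and therefore admit \emph{polynomial}-size constant-depth circuits (a naive DNF would be of quasipolynomial size $2^{\polylog(n)}$). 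This is exactly what keeps each tree level in $\AC^0$ rather than merely in $\mathrm{TC}^0$, and what prevents the total depth from degrading to $(\log\log n)^2$.
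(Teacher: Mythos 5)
Your proof is correct and is essentially the paper's own argument: both scale the Mahajan--Vinay logarithmic-depth computation of the determinant down to $\polylog(n)$ dimension so that the depth becomes $O(\log\log n)$, and both rest on the same crux you flag, namely that threshold/majority gates over $\polylog(n)$ bits admit $\poly(n)$-size constant-depth $\AC^0$ circuits (Ajtai), which collapses each of the $O(\log\log n)$ levels to $\AC^0$. The only difference is presentational: the paper uses the arithmetic $\SAC^1$ form of Mahajan--Vinay and cites Hesse--Allender--Barrington to simulate each layer in $\TC^0$, whereas you instantiate Mahajan--Vinay as an iterated matrix product evaluated by a balanced tree and argue the $\TC^0$ bound for the small-integer arithmetic directly.
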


\begin{proof}
It is well known that the determinant of a matrix of $n$ variables can
be computed by boolean threshold circuits\footnote{Threshold circuits
  allow for unbounded fanin threshold gates apart from NOT, AND, and
  OR gates.} of polynomial size and logarithmic depth, i.e.\ it is in
$\TC^1$. (Proof sketch: the determinant of a matrix of polynomial
dimension with polynomial in $n$ bit entries can be computed in
arithmetic $\SAC^1$ \cite[Table 2]{MV}. In other words, it can be
computed by a layered logarithmic depth circuit with gates from
$\{+,-,*\}$ where the $*$-gates have fan-in $2$. Now by applying
\cite{HAB} each layer of this arithmetic circuit can be simulated in
$\TC^0$, i.e.\ constant-depth threshold circuits). Hence, replacing
$n$ with $\log n$, it follows that the determinant of matrices of
$\polylog(n)$ dimension with $\polylog(n)$ bit entries can be computed
by a threshold circuit of depth $O(\log\log n)$ and size
$\polylog(n)$. Furthermore, threshold gates of $\polylog n$ fanin can
be computed by $\poly(n)$ size uniform $\AC^0$ \cite{Ajtai,Ajtai2}.
Replacing the threshold gates by the corresponding $\AC^0$-circuit of
size $\poly(n)$ completes the proof.
\end{proof}

\begin{comment}
\paragraph{\textbf{Static Data Structure for matrix $[A ~|~ p^mI]$}}

The integer coefficient matrix $\tilde{A}=[A~|~p^mI]$ is $O(\log
n)\times O(\log n)$.  Thus, it has only polynomially many square
submatrices. For all these square submatrices of $[A~|~p^mI$ we will
  precompute, in parallel, their inverses and determinants. By
  Lemma~\ref{lem:detPolylogInFOLL} this entire computation can be
  carried out in $\AC(\log\log n)$ because the matrices are $O(\log
  n)$-dimensional.
\end{comment}  


\begin{lemma}\label{samir1}
Let $Ax=\bar{b}~(\Mod p^m)$ be a system of integer linear equations
modulo $p^m$, where $p^m=n$ is input in unary, and
$A\in\mathbb{Z}^{t\times t}$ and $\bar{b}\in \mathbb{Z}^t$, $t=O(\log
n)$. Then we have the following:
\begin{enumerate}
\item Let $\tilde{A}=[A|p^m I_\ell]$. We can compute the determinants
  of all square submatrices of $\tilde{A}$ and $[\tilde{A}|\bar{b}]$
  in $\AC(\log\log n)$ (i.e. in $\log\log n$ parallel time).
\item Furthermore, for the nonsingular submatrices we can also compute
  their inverses in $\AC(\log\log n)$.
\item Given the above data we can test the feasibility of
  $Ax=\bar{b}~(\Mod p^m)$ and solve for $x$ in $\AC^0$.
\end{enumerate}  
\end{lemma}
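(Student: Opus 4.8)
The plan is to dispatch the three claims in order, leaning on Lemma~\ref{lem:detPolylogInFOLL} for all the determinant computations and on the Ajtai-type simulation of $\polylog(n)$-fanin threshold gates by $\AC^0$ \cite{Ajtai,Ajtai2} to push the final feasibility-and-solve step down to $\AC^0$. For the first claim I would begin by observing that there are only polynomially many square submatrices to process: since $\tilde{A}=[A\mid p^mI_\ell]$ has $t=O(\log n)$ rows and $O(\log n)$ columns, every square submatrix is obtained by choosing $k$ rows and $k$ columns for some $k\le t$, so their number is at most $\sum_{k}\binom{t}{k}\binom{2t}{k}\le 2^{3t}=2^{O(\log n)}=\poly(n)$, and adjoining the column $\bar{b}$ changes neither the number of rows nor this polynomial bound. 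Each such submatrix has dimension $O(\log n)=\polylog(n)$ and entries bounded by $p^m=n$, so by Lemma~\ref{lem:detPolylogInFOLL} its determinant is computable in $\AC(\log\log n)$; running all of these polynomially many determinant circuits in parallel stays in $\AC(\log\log n)$, which gives the first claim.

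For the second claim I would compute each inverse through the adjugate formula $B^{-1}=\frac{1}{\det B}\,\mathrm{adj}(B)$, whose $(i,j)$ entry is the signed cofactor $(-1)^{i+j}\det(B_{ji})$. Every cofactor is itself the determinant of a (one-smaller) submatrix of $\tilde{A}$ and is therefore available in $\AC(\log\log n)$ exactly as in the first claim; representing $B^{-1}$ as the integer matrix $\mathrm{adj}(B)$ together with the common denominator $\det B$ keeps the whole computation in $\AC(\log\log n)$.

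The third claim is where the real work lies. For feasibility I would invoke Lemma~\ref{av-lem}: the system is feasible iff the highest power of $p$ dividing the $\gcd$ of the maximal ($t\times t$) minors of $\tilde{A}$ equals that for $[\tilde{A}\mid\bar{b}]$. Writing $v_p$ for the $p$-adic valuation, the exponent in question is exactly $\min_{D} v_p(D)$ over the relevant minors $D$, so it suffices to compute $v_p(D)$ for each of the polynomially many minors and then take a minimum and a comparison. The decisive observation is that each minor $D$ is a $\polylog(n)$-bit integer (a $t\times t$ determinant with entries $\le n$ has $O(\log^2 n)$ bits), and computing the $p$-adic valuation of a $\polylog(n)$-bit number — for instance by precomputing the $\polylog(n)$-bit powers $p,p^2,\dots$ and testing divisibility, or by converting to base $p$ and counting trailing zero digits — is a $\TC^0$ operation in which every gate has $\polylog(n)$ fanin; by \cite{Ajtai,Ajtai2} such circuits are simulable by $\poly(n)$-size uniform $\AC^0$. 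The subsequent minimum over polynomially many valuations (each an $O(\log\log n)$-bit number) and the final comparison are plainly in $\AC^0$, which settles feasibility.

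To actually produce a solution I would select a nonsingular $t\times t$ submatrix $B$ of $\tilde{A}$ attaining the minimal valuation $\min_{D}v_p(D)$ (such a $B$ exists precisely because $\tilde{A}$ has full row rank, and it is identified in $\AC^0$ from the valuation data), set the remaining variables to $0$, and recover a solution from $z=B^{-1}\bar{b}$ using the precomputed inverse of the second claim, finally reducing the $x$-components modulo $p^m$. Every arithmetic operation here is on $\polylog(n)$-bit quantities, so the same threshold-simulation argument keeps the entire solve step in $\AC^0$. I expect the main obstacle to be exactly this last step: one must argue carefully that the chosen minimal-valuation submatrix yields a genuine solution of the composite-modulus system $Ax=\bar{b}~(\Mod p^m)$ rather than a merely rational one, and that every intermediate quantity remains $\polylog(n)$-bit bounded so that the $\AC^0$ simulation of the threshold gates is legitimately applicable. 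By contrast, the feasibility test becomes routine once the observation that $p$-adic valuations of $\polylog(n)$-bit numbers are computable in $\AC^0$ is in place.
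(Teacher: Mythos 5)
Your parts 1 and 2 are in substance identical to the paper's own proof: there are only polynomially many square submatrices since $\tilde{A}$ is $O(\log n)\times O(\log n)$, each determinant is handled by Lemma~\ref{lem:detPolylogInFOLL} in parallel, and the inverses come from Cramer's rule, every cofactor being itself a precomputed subdeterminant (the paper records the inverses modulo the prime power using $\AC^0$ division of small integers, you keep the integer adjugate together with the denominator $\det B$; both representations work). Your feasibility test in part 3 is also the paper's --- Lemma~\ref{av-lem} applied to the precomputed minors --- except that you spell out why extracting the $p$-adic valuations $v_p(D)$ of the $\polylog(n)$-bit determinants, taking the minimum, and comparing is $\AC^0$ via the Ajtai simulation of small-fanin threshold gates; the paper just asserts this step in one line, so this extra detail is welcome.

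The only genuine issue is the solve step, which you explicitly flag as an unresolved obstacle --- and, notably, it is exactly the step the paper itself glosses over. The paper takes ``a maximum dimension submatrix with non-zero determinant'', inverts, zero-pads, and asserts correctness ``as we know the latter is feasible''; taken literally this can fail: for $\tilde{A}=[3\mid 9]$, $p^m=9$, $\bar{b}=(3)$, the max-dimension nonsingular submatrix $[9]$ gives $z'=1/3$, which is not even a $p$-adic integer, while the system $3x\equiv 3 \pmod 9$ is feasible. Your choice of a full-size minor $B$ of \emph{minimal} $p$-adic valuation is the correct fix, and the gap you leave open closes in a few lines, so you should include them. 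By Cramer's rule the $i$-th entry of $B^{-1}\bar{b}$ is $\det(B_i)/\det(B)$, where $B_i$ ($B$ with its $i$-th column replaced by $\bar{b}$) is itself a full-size minor of $[\tilde{A}\mid\bar{b}]$. Feasibility, via Lemma~\ref{av-lem}, says that the minimum of $v_p$ over minors of $[\tilde{A}\mid\bar{b}]$ equals the minimum over minors of $\tilde{A}$, which is $v_p(\det B)$ by your choice of $B$; hence $v_p(\det B_i)\ge v_p(\det B)$ for every $i$, so every entry of $B^{-1}\bar{b}$ is a rational whose reduced denominator is coprime to $p$. Reducing entrywise modulo $p^m$ (i.e., multiplying by the inverse of that denominator mod $p^m$) therefore yields an integer vector $z^*$ with $Bz^*\equiv \bar{b} \pmod{p^m}$, and zero-padding gives $\tilde{A}z\equiv\bar{b}$, i.e., $Ax\equiv\bar{b}\pmod{p^m}$ as required. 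With that paragraph added your proof is complete, and at this particular step it is more careful than the paper's.
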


\begin{proof}
For the first part, the number of square submatrices is polynomially
bounded as $\tilde{A}$ has dimension $O(\log{n}) \times
O(\log{n})$. Reducing modulo $p^m$, the entries of the matrix are
bounded by $n$. Thus, by Lemma~\ref{lem:detPolylogInFOLL} it follows
that the determinant as an integer can be computed in
$\AC(\log{\log{n}})$. Reducing modulo $p^m$ yields the answer and we
know from \cite{HAB} that the division by a unary number is possible
in $\AC^0$.


For the second part, consider every nonsingular submatrix $N$,
i.e.\ $\det{N} \bmod{p}$ is non-zero. We can compute the entries of
$N^{-1}$ modulo $p$ by Cramer's rule, as each cofactor of $N$ is also
a submatrix of $\tilde{A}$. Since we can compute division of
$O(\log{n})$-bit integers in $\AC^0$ (see \cite[Theorem 5.1]{HAB}) it
follows that these computations can also be done in
$\AC(\log\log{n})$.

For the last part, notice that feasibility can be tested in $\AC^0$,
given the data of first two parts, by Lemma~\ref{av-lem}. Let a
maximum dimension submatrix with non-zero determinant be
$N$. W.l.o.g.\ $N$ is the top left submatrix of $\tilde{A}$ and has
dimension $L$. Let $Nz' = \bar{b'}$ be the system of equations
obtained by truncating all rows below the $L^{th}$ from $\bar{b}$. The
solution to this is $z' = N^{-1}\bar{b'}$, where the right hand size
is computable in $\AC^0$ given $N^{-1}$. Now we can extend this
solution to a solution of $\tilde{A}z = \bar{b}$ by putting $z_i =
z'_i$ for each column $i$ in $N$ and $z_i = 0$ for other columns.  It
is easy to see that this must be a solution of $\tilde{A}z = \bar{b}$,
as we know the latter is feasible.
\end{proof}

This preprocessing of the matrix $\tilde{A}$ needs to be combined with
a variant of the matrix inverse lemma stated below \cite{HeSe81} (this
is a variant of the so-called Sherman-Morrison-Woodbury formula) to
dynamically compute solutions to $\tilde{A}z=\bar{b}$. This formula
essentially allows for a quick updation of the data computed using
Lemma~\ref{samir1} for $A$, if $A$ is replaced with $A+A'$ for a small
rank matrix $A'$.

\begin{lemma}[Binomial Matrix Theorem]\label{SMW}{\rm\cite{HeSe81}}
  Let $M$ be an invertible $r\times r$ matrix over any field (or
  ring).  Let $C$, $U$ and $V$ be $t\times t$, $r\times t$ and
  $t\times r$ matrices respectively, over the same field/ring. 
  If $M+UCV$ is invertible then the inverse is
  \begin{equation}\label{mil-eqn}
  (M+UCV)^{-1} = M^{-1} - M^{-1}U(I+CVM^{-1}U)^{-1} CVM^{-1}.
  \end{equation}
\end{lemma}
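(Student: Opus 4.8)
The plan is to verify the claimed formula by a single direct substitution, checking that the proposed right-hand side is a one-sided inverse of $M+UCV$. Since the hypothesis already guarantees that $M+UCV$ is invertible, a one-sided inverse is automatically the two-sided inverse: if $(M+UCV)X=I$ and $Y$ denotes the (two-sided) inverse, then left-multiplying by $Y$ gives $X=Y=(M+UCV)^{-1}$, valid over any ring. So I need not check both sides. Throughout I would write $K=(I+CVM^{-1}U)^{-1}$, a $t\times t$ matrix, and let $X=M^{-1}-M^{-1}U\,K\,CVM^{-1}$ be the candidate inverse.

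First I would record the dimension bookkeeping so every product is well-formed: $U$ is $r\times t$, $C$ is $t\times t$, $V$ is $t\times r$, so $UCV$ is $r\times r$ and $M+UCV$ is a square $r\times r$ matrix, while $CVM^{-1}U$ is $t\times t$; hence $I+CVM^{-1}U$ is square of size $t$ and $K$ is meaningful. I would also note that the formula is only well-defined when $I+CVM^{-1}U$ is invertible, and that this is exactly the hypothesis in disguise: since $M+UCV=M(I+M^{-1}U\cdot CV)$, and by Sylvester's identity $\det(I_r+M^{-1}U\cdot CV)=\det(I_t+CV\cdot M^{-1}U)$, the matrices $I+M^{-1}UCV$ and $I+CVM^{-1}U$ are invertible together (over a field; over a ring one simply assumes the invertibility of $I+CVM^{-1}U$ needed to make $K$ exist).

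The core computation is to expand $(M+UCV)X$. The one algebraic step where any cleverness is needed is the factoring identity
\[
(M+UCV)M^{-1}U=(I+UCVM^{-1})U=U(I+CVM^{-1}U),
\]
obtained by distributing $(M+UCV)M^{-1}=I+UCVM^{-1}$ and then pulling $U$ out on the left. Feeding this into the middle term of $X$ makes the factor $(I+CVM^{-1}U)$ cancel against $K=(I+CVM^{-1}U)^{-1}$, collapsing the correction term to exactly $UCVM^{-1}$. Concretely,
\[
(M+UCV)X=(I+UCVM^{-1})-U(I+CVM^{-1}U)\,K\,CVM^{-1}=I+UCVM^{-1}-UCVM^{-1}=I,
\]
which is the desired equality.

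The main obstacle is bookkeeping rather than anything conceptual: one must keep the two block sizes $r$ and $t$ straight so the cancellation is applied on the correct side, and, since the lemma is stated over an arbitrary ring, the verification itself must avoid any appeal to commutativity or to determinant-based invertibility (the Sylvester argument is used only to justify the existence of $K$, not inside the algebraic check). Once the factoring identity above is in hand, the remainder is the one-line cancellation displayed above.
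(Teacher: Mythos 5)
Your proof is correct. Note that the paper itself gives no proof of this lemma at all --- it is quoted verbatim from Henderson--Searle \cite{HeSe81} and used as a black box --- so there is no internal argument to compare against; your direct verification is exactly the kind of self-contained justification the paper omits. The core computation is sound: the factoring identity $(M+UCV)M^{-1}U = U(I+CVM^{-1}U)$ makes the factor $(I+CVM^{-1}U)$ cancel against $K$, and your observation that a one-sided inverse of an element already known to be invertible is the two-sided inverse is valid over any associative ring with unit, so checking $(M+UCV)X=I$ alone suffices. One small improvement is available on the point you handled by assumption: even over a general ring, the existence of $K$ need not be postulated separately, since it follows from the hypotheses via the standard identity that $I+AB$ is invertible if and only if $I+BA$ is, with the explicit formula $(I+BA)^{-1}=I-B(I+AB)^{-1}A$; applying this with $A=M^{-1}U$ and $B=CV$, the invertibility of $I+M^{-1}UCV=M^{-1}(M+UCV)$ (a product of invertibles) yields the invertibility of $I+CVM^{-1}U$ without any determinant or Sylvester-identity argument, which also removes the field/ring case split in your justification.
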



Similarly to update determinants quickly we will need the following.

\begin{lemma}[matrix determinant lemma]\label{mdet-lem}
  If $M$ is a $r\times r$ matrix over a field and $U$ and $V$ are
  $r\times t$ and $t\times r$ matrices then
  \begin{equation}\label{mdl-eqn}
  \det(M+UV^T) = \det(I_t+V^TM^{-1}U)\det(M).
  \end{equation}
\end{lemma}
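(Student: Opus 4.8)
The plan is to prove the identity by embedding the problem into a single bordered block matrix whose determinant we evaluate in two different ways via Schur complements. Throughout I assume $M$ is invertible, which is forced by the appearance of $M^{-1}$ on the right-hand side (exactly as in the preceding Binomial Matrix Theorem); and so that the displayed products are well-defined I take $U\in\F^{r\times t}$ and $V^T\in\F^{t\times r}$, so that $UV^T$ is $r\times r$ and $V^TM^{-1}U$ is $t\times t$. The key object is the $(r+t)\times(r+t)$ matrix
\[
B=\begin{pmatrix} M & U \\ -V^T & I_t \end{pmatrix}.
\]

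First I would factor $B$ by eliminating the lower-left block using the invertible pivot $M$:
\[
B=\begin{pmatrix} I_r & 0 \\ -V^TM^{-1} & I_t \end{pmatrix}\begin{pmatrix} M & U \\ 0 & I_t+V^TM^{-1}U \end{pmatrix}.
\]
The first factor is block lower-triangular with identity diagonal blocks, so it has determinant $1$, and the second is block upper-triangular; hence $\det B=\det(M)\det(I_t+V^TM^{-1}U)$. Next I would factor $B$ the other way, eliminating the upper-right block using the pivot $I_t$:
\[
B=\begin{pmatrix} M+UV^T & U \\ 0 & I_t \end{pmatrix}\begin{pmatrix} I_r & 0 \\ -V^T & I_t \end{pmatrix},
\]
which a direct block multiplication confirms. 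Here the second factor has determinant $1$ and the first is block upper-triangular, so $\det B=\det(M+UV^T)$. Equating the two expressions for $\det B$ yields $\det(M+UV^T)=\det(I_t+V^TM^{-1}U)\det(M)$, which is exactly \eqref{mdl-eqn}.

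The only genuine subtlety is the reliance on $M$ being invertible in the first factorization. Since the statement is phrased with $M^{-1}$ explicit, I would simply carry invertibility of $M$ as a standing hypothesis, so no extra argument is needed. (Should a version valid for singular $M$ be wanted, one observes that both sides of \eqref{mdl-eqn}, after multiplying through by $\det(M)$ and using the adjugate in place of $M^{-1}$, are polynomial in the entries of $M$ and agree on the Zariski-dense set of invertible $M$, hence agree identically; but this is not required for the application here.) The remaining work is the purely mechanical verification of the two block factorizations and of the block-triangular determinant rule, which I expect to be routine and to present no real obstacle.
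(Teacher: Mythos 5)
Your proof is correct, but there is nothing in the paper to compare it against: the paper states this lemma with no proof at all, treating it as a known fact (just as the preceding Binomial Matrix Theorem, Lemma~\ref{SMW}, is quoted from \cite{HeSe81}). Your argument is the standard Schur-complement proof: both block factorizations of $B=\left(\begin{smallmatrix} M & U\\ -V^T & I_t\end{smallmatrix}\right)$ multiply out exactly as you claim, the block-triangular determinant rule is valid over any field, and equating the two evaluations of $\det B$ yields \eqref{mdl-eqn}, so the proof is complete. Two of your side observations actually improve on the printed statement. First, the lemma as stated has a dimension slip: with $U$ of size $r\times t$ and $V$ of size $t\times r$, the products $UV^T$ and $V^TM^{-1}U$ are not conformable; your convention ($V\in\F^{r\times t}$, hence $V^T\in\F^{t\times r}$) is the intended reading. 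Second, invertibility of $M$ must indeed be carried as a hypothesis, since the identity presupposes $M^{-1}$; your Zariski-density remark correctly disposes of the singular case, and, as you anticipate, it is not needed for the paper's application, where the authors sidestep precisely this issue by applying the lemma to the invertible matrices $\xi I-M$ rather than to $M$ itself.
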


We will now see how to use these in the context of processing $O(\log
n/\log\log n)$ bulk insertions and deletions. Finally, we will also
require the following lemma to put everything together.

\begin{lemma}{\rm\cite[Theorem 8]{DMVZ18}}\label{samir2}\hfill{~}
  \begin{enumerate}
\item Let $t=\frac{O(\log n)}{\log\log n}$ and $B$ be a $t\times t$
  integer matrix with entries bounded by $p^m$ and $q$ be an
  $O(\log\log n)$ bit prime number. Then both $\det(B) (\Mod q)$ and
  $B^{-1}$ over $\F_q$ can be computed in $\AC^0$.
\item Furthermore, by Chinese remaindering, $\det(B)$ and hence
  $B^{-1}$, if it exists, can both be computed in $\AC^0$ by applying
  the first part for several distinct primes $q_i$ and different
  submatrices.
\end{enumerate}  
\end{lemma}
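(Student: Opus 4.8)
The plan is to combine two standard circuit facts with the tight bound on $t$. First I would record the parameter estimate: for $t=O(\log n/\log\log n)$ we have $\log(t!)\le t\log t=O(\log n)$, so $t!=\poly(n)$, and likewise $t\log q=O(\log n)$ (as $q=\polylog(n)$), so $q^{t}=\poly(n)$. Hence with $\poly(n)$ processors I can enumerate, in parallel, every permutation of $S_t$ and every vector of $\F_q^{t}$. Second, I would isolate the two workhorses. (F1): any Boolean function of $O(\log n)$ input bits is in $\AC^0$ (write it as a $\poly(n)$-size DNF), so addition, multiplication, comparison and reduction modulo $q$ of $O(\log n)$-bit integers are all in $\AC^0$. (F2): by Ajtai's simulation \cite{Ajtai,Ajtai2} (already invoked in Lemma~\ref{lem:detPolylogInFOLL}), threshold gates of $\polylog(n)$ fan-in collapse to $\AC^0$, so any $\TC^0$ subcomputation whose threshold gates all have $\polylog(n)$ fan-in — in particular iterated addition, $\Mod q$, and division of $\polylog(n)$ many $\polylog(n)$-bit numbers — lies in $\AC^0$. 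Crucially, (F2) is exactly what \emph{fails} for the Leibniz formula, whose outer sum has $\poly(n)$, not $\polylog(n)$, terms.

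For $B^{-1}$ and for solving systems I would work entirely by enumeration. After reducing $B$ modulo $q$ entrywise (each an instance of (F1)) to get $\bar B\in\F_q^{t\times t}$, I solve any $k\times k$ system $\bar Ay=b$ ($k\le t$) by enumerating all $q^{k}\le q^{t}=\poly(n)$ candidates $y\in\F_q^{k}$ in parallel and testing each equation: every coordinate of $\bar Ay$ is an inner product of length $\le t$ over $\F_q$, i.e.\ a sum of $O(\log n/\log\log n)$ products of small numbers, an $O(\log n)$-bit computation, hence $\AC^0$ by (F1). Selecting the (unique, when $\bar A$ is invertible) satisfying $y$ is an $\AC^0$ selection over $\poly(n)$ candidates. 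The same enumeration decides singularity (search for a nonzero kernel vector), hence rank, and applied to $b=e_j$ for each $j$ it produces the columns of $\bar B^{-1}$; this already gives $B^{-1}$ over $\F_q$ in $\AC^0$.

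The determinant \emph{value} in constant depth is the crux, and is where I expect the real work. Cofactor expansion has recursion depth $t=\omega(1)$, and the Leibniz sum, though it has only $\poly(n)$ terms each computable by (F1), requires adding $\poly(n)$ elements of $\F_q$ — essentially $\Mod q$ on $\poly(n)$ inputs, which is \emph{not} in $\AC^0$. I would instead compute the value through a pivot product with only $\polylog$-size sums. If $\bar B$ is singular I output $0$. Otherwise I use the standard fact that an invertible matrix over a field admits a row permutation $\tau$ for which every leading principal minor of $\bar B^{\tau}$ is nonzero (partial-pivoting $LU$); since $t!=\poly(n)$ I enumerate all $\tau$ and pick one passing these minor tests (each an $\AC^0$ kernel search on a leading submatrix). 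For this $\tau$, writing $\bar B^{\tau}=LU$, the pivots are the Schur-complement entries
\[
\rho_k \;=\; \bar B^{\tau}_{kk}-\bar B^{\tau}[k,1{:}k{-}1]\,\bigl(\bar B^{\tau}[1{:}k{-}1,1{:}k{-}1]\bigr)^{-1}\bar B^{\tau}[1{:}k{-}1,k],
\]
and $\det \bar B^{\tau}=\prod_{k=1}^{t}\rho_k$. Each leading-submatrix inverse is produced in $\AC^0$ by the enumeration above, and each $\rho_k$ is then a sum of at most $t^{2}=\polylog(n)$ products over $\F_q$, hence $\AC^0$ by (F2); all $k$ are handled in parallel. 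The final product of the $t$ pivots is a product of $t$ numbers each below $q$, an $O(\log n)$-bit quantity, so it is $\AC^0$ by (F1), and $\det(B)\equiv\operatorname{sgn}(\tau)\prod_k\rho_k\pmod q$. This settles Part~1.

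For Part~2 I would lift to the integers by Chinese remaindering. Since $|\det(B)|\le t!\,(p^{m})^{t}=2^{O(\log^{2}n/\log\log n)}$, the integer $\det(B)$ has only $\polylog(n)$ bits, so it is determined by its residues modulo $r=\polylog(n)$ distinct $O(\log\log n)$-bit primes $q_1,\dots,q_r$ (primality of such small numbers is decidable in $\AC^0$ by (F1), and there are enough such primes). Computing all $\det(B)\bmod q_i$ in parallel by Part~1 and reconstructing via the precomputed CRT coefficients is a sum of $\polylog(n)$ many $\polylog(n)$-bit terms followed by a reduction, hence $\AC^0$ by (F2); the same scheme applied to $B$ over each $\F_{q_i}$ (or to the relevant minors) recovers $B^{-1}$ when it exists. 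The single genuine obstacle throughout is the constant-depth evaluation of the determinant value, which the pivot-product reformulation is designed to overcome by trading the forbidden $\poly(n)$-fold sum for a $t$-fold product together with $\polylog$-fold sums.
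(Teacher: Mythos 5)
Your proposal cannot be ``the same approach as the paper'' for a simple reason: the paper does not prove this lemma at all. It is imported as a black box, cited as Theorem~8 of \cite{DMVZ18}, and no argument for it appears anywhere in the text. What you have written is therefore a self-contained substitute for that citation, and on checking it, it is correct. The parameter estimates $t!=\poly(n)$ and $q^{t}=\poly(n)$ are exactly what make every enumeration step polynomial size; solving $k\times k$ systems and testing singularity over $\F_q$ by brute-force enumeration is sound, since each candidate test is a function of $O(\log n)$ bits; the existence of a row permutation $\tau$ making all leading principal minors of $\bar B^{\tau}$ nonsingular is the standard partial-pivoting fact; and the Schur-complement identity $\det(\bar B^{\tau}_k)=\det(\bar B^{\tau}_{k-1})\,\rho_k$ does give $\det \bar B^{\tau}=\prod_k\rho_k$, where each $\rho_k$ is a $\polylog(n)$-fold sum handled by the Ajtai-scaled threshold simulation (your F2) --- the same device the paper itself invokes in the proof of Lemma~\ref{lem:detPolylogInFOLL}. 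This correctly circumvents the obstruction you identify, namely that a $\poly(n)$-fold sum modulo $q$ is a MOD-$q$ computation and hence not in $\AC^0$; your observation that this is where the naive Leibniz or Laplace route dies is accurate, and the pivot-product reformulation is a legitimate way around it. The CRT lift in Part~2 is also fine, since $|\det(B)|\le t!\,(p^m)^{t}$ has only $\polylog(n)$ bits and there are enough $O(\log\log n)$-bit primes. What your argument buys over the paper's citation is self-containedness using only tools already present in the paper (arithmetic on $O(\log n)$-bit numbers, Ajtai's simulation of $\polylog$-fanin threshold gates, scaled-down Hesse--Allender--Barrington division); whether it coincides with the internal proof in \cite{DMVZ18} cannot be judged from the paper itself.

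Two minor points you should tighten. First, in Part~2 the recovery of $B^{-1}$ should be made explicit via Cramer's rule: compute $\det(B)$ and all $(t-1)\times(t-1)$ cofactors as integers by CRT, then $B^{-1}=\operatorname{adj}(B)/\det(B)$, the division being on $\polylog(n)$-bit integers and hence in $\AC^0$ by your F2; this is precisely what the phrase ``different submatrices'' in the statement refers to, and your parenthetical ``(or to the relevant minors)'' should be expanded to say this. Second, since the paper claims dlogtime uniformity of all its circuits, your F1 should not be left as ``an arbitrary function of $O(\log n)$ bits has a poly-size DNF'' (which is a non-uniform statement); for the specific functions you need --- arithmetic and comparison on $O(\log n)$-bit numbers, the index-to-candidate maps of your enumerations, and $\operatorname{sgn}(\tau)$ --- FO-definability is standard and worth one sentence.
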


\paragraph{\textbf{Processing Bulk Insertions and Deletions}}

We recall that $S\subseteq G$ is the current generating set. We will
additionally maintain a subset $T\subseteq S$ of size at most $\log n$
that also generates the same group: $\angle{S}=\angle{T}$. 
 
Suppose $\hat{T}$ is the set of insertions to $S$, where
$|\hat{T}|=t=O(\log n/\log\log n)$. Thus, the system of linear
equations $Ax=\bar{b}~(\Mod p^m)$ is now modified to
\[
  [A | \hat{A}]x=\bar{b}~(\Mod p^m),
\]  
where $\bar{b}$ is the integer vector corresponding to a $g\in G$
whose membership we want to test in $\angle{T\cup \hat{T}}$.

Similarly, suppose $\hat{T}$ is the set of deletions to $S$ which are contained
in $T$. Let $T'=T\setminus \hat{T}$ and $A'$ be the corresponding column
vectors. The system of linear equations is then modified to
\[
  A'x = \bar{b}~(\Mod p^m).
\]

We note that in both modified linear equations above, the original
coefficient matrix has been modified in at most $t$ columns. Thus,
Lemmas~\ref{SMW} and \ref{mdet-lem} are applicable. With these we can
update the data computed by Lemma~\ref{samir1} in $\AC^0$. It can be
recomputed in $\AC^0$ by Lemma~\ref{samir2} as at most $O(\log
n/\log\log n)$ columns are modified in any submatrix, thinking of the
new columns as modifications of zero columns. Furthermore the
recomputations involves computing the determinant and inverse of
matrices of dimension at most $t=O(\log n/\log\log n)$, where those
matrices have integer entries given as input in unary (because each of
them is at most $p^m$ in magnitude). A crucial difficulty in the
application of Lemmas~\ref{SMW} and \ref{mdet-lem} is that if a
submatrix $M$ of $A$, whose inverse/determinant we need to update, may
itself not be invertible. We can deal with this by maintaining the
data computed by Lemma~\ref{samir1} for the \emph{invertible} matrices
$\xi I-M$, for all submatrices $M$ of $A$, where $\xi$ is an
indeterminate.  Lemma~\ref{lem:detPolylogInFOLL} and parts 1 and 2 of
Lemma~\ref{samir1} can be applied \textit{mutatis mutandis} to
matrices $\xi I-M$ (for the submatrices $M$ of $\tilde{A}$). The
determinant of $\xi I-M$ will be a degree $r$ polynomial in $\xi$ and
$(\xi I-M)^{-1}$ will have entries that are rational functions
$f(\xi)/g(\xi)$ where $f$ and $g$ are of degree at most $r$, where
$r=O(\log n)$. Consider Equations~\ref{mil-eqn} and \ref{mdl-eqn}
applied to $\xi I - M$ instead of $M$. Notice that $\det(M+UV^T)$ is
the constant term of $\det(M + UV^T -\xi I)$ which we can compute in
$\AC^0$, essentially by Lemma~\ref{samir2}. Similarly, by
Lemma~\ref{samir2} the inverse $(M+UCV^T)^{-1}$, if it exists, can be
computed in $\AC^0$ from Equation~\ref{mil-eqn} applied to $\xi I - M$.

\paragraph{\textbf{Continual rebuilding of data structure}}~We note
that the data required by the algorithm to answer membership queries
in $\AC^0$ are:
\begin{itemize}
  \item A generating set $T$ of size at most $\log n$ such that
    $\angle{T}=\angle{S}$.
  \item An $\ell\times |T|$ dimensional matrix $A$ corresponding to
    the generating set along with the inverses and determinants of all
    its submatrices as computed by Lemma~\ref{samir1}.
\end{itemize}

With the above the algorithm can check feasibility of
$Ax=\bar{b}~(\Mod p^m)$ in $\AC^0$ and hence answer membership
queries. However, there are $O(\log n/\log\log n)$
insertions/deletions in every step to the generating set
$T$. Cumulatively, over some $k$ steps the modified generating set,
say $T\cup \hat{T}$ can be of size $\log n + k\cdot O(\log n/\log\log
n)$ which is $O(\log n)$ for $k=O(\log\log n)$. Thus, in order to keep
the generating set small, we will need to use Lemma~\ref{preproc} to
compute a subset $\tilde{T}$ of $T\cup \hat{T}$ of size at most $\log
n$ which generates the same group as $T\cup \hat{T}$. 

To describe the rebuilding process more precisely, at each time
instant $i$, if the current generating set is $T\cup \hat{T}$ we run
the $\FOLL$ algorithm of Lemma~\ref{preproc} to compute the subset
$\tilde{T}$. In time instant $i+\log\log n$ we will have $\tilde{T}$
along with the modifications $T'$ accumulated in $\log\log n$ time
interval from $i$ to $i+\log\log n$. During this time period, we can answer the membership query at every instant under insertion/deletion of $O(\log n /\log\log n)$ by updating all the submatrices' inverses and determinants in $\AC^0$ by Lemmas~\ref{SMW} and~\ref{mdet-lem} respectively. Thus, at time instant $i+\log\log n$ we would have all the submatrix information pertaining to the 
set $T\cup\hat{T}\cup T'$, and hence, also for its subset $\tilde{T}\cup T'$. More precisely, if the bulk insertions
and deletions are bounded by $c(\log n/\log\log n)$ then
$|\tilde{T}\cup T'|\le (c+1)\log n$ at any time instant. Furthermore,
notice that at any point in time there are $\log\log n$ parallel
threads of computation carrying out this rebuilding of data, one each
corresponding to a time instant in the $\log\log n$ window.

\begin{comment}
To describe the rebuilding process more precisely, at each time
instant $i$, if the current generating set is $T\cup \hat{T}$ we run
the $\FOLL$ algorithm of Lemma~\ref{preproc} to compute the subset
$\tilde{T}$. In time instant $i+\log\log n$ we will have $\tilde{T}$
along with the modifications $T'$ accumulated in $\log\log n$ time
interval from $i$ to $i+\log\log n$. All other required data to answer
a membership query at time instant $i+\log\log n$ for the generating
set $\tilde{T}\cup T'$ would also be ready by Lemma~\ref{samir2}
(using Lemmas \ref{SMW} and \ref{mdet-lem}), where $\tilde{T}\cup T'$
is also of size $O(\log n)$. More precisely, if we the bulk insertions
and deletions are bounded by $c(\log n/\log\log n)$ then
$|\tilde{T}\cup T'|\le (c+1)\log n$ at any time instant. Furthermore,
notice that at any point in time there are $\log\log n$ parallel
threads of computation carrying out this rebuilding of data, one each
corresponding to a time instant in the $\log\log n$ window.
\end{comment}
To summarize we have shown the following theorem.

\begin{theorem}\label{thm2}
  There is a deterministic dynamic $\AC^0$ algorithm for the abelian
  Cayley Group Membership problem, $\CGM$, that supports $O(\log
  n/\log\log n)$ insertions and deletions to the generating set.
\end{theorem}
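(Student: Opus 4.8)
The plan is to maintain, as the core data structure, a subset $T\subseteq S$ of size at most $\log n$ with $\angle{T}=\angle{S}$, together with the coefficient matrix $A$ whose columns are the vector representations (with respect to the independent generating set $g_1,\ldots,g_\ell$) of the elements of $T$, and all the determinant/inverse data for the square submatrices of $\tilde{A}=[A\mid p^mI_\ell]$ and of $[\tilde{A}\mid\bar{b}]$ supplied by Lemma~\ref{samir1}. Given this data, answering a membership query $g\in\angle{T}$ amounts to testing feasibility of $Ax=\bar{b}~(\Mod p^m)$, equivalently $\tilde{A}z=\bar{b}$; by Lemma~\ref{av-lem} this is decided in $\AC^0$ from the precomputed GCD-of-subdeterminant information, while the one-time static preprocessing of that information is in $\AC(\log\log n)$ by Lemma~\ref{samir1}.

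First I would process a single batch of $t=O(\log n/\log\log n)$ insertions/deletions. Both kinds of update alter $A$ in at most $t$ columns: insertions append $t$ new columns (regarded as modifications of zero columns) and deletions zero out $t$ columns. Hence every square submatrix $M$ of $\tilde{A}$ is subjected to a rank-$\le t$ perturbation $M+UV^T$, and I would update its determinant and inverse via the matrix determinant lemma (Lemma~\ref{mdet-lem}) and the Sherman--Morrison--Woodbury formula (Lemma~\ref{SMW}). The auxiliary matrices $I_t+V^TM^{-1}U$ and $I+CVM^{-1}U$ appearing in those formulas have dimension $t=O(\log n/\log\log n)$ with entries bounded by $p^m=n$ given in unary, so by Lemma~\ref{samir2} their determinants and inverses are computable in $\AC^0$ through Chinese remaindering over $O(\log\log n)$-bit primes. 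This is precisely the step that forces the $O(\log n/\log\log n)$ ceiling on the batch size.

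The main obstacle is that the inverse lemma requires the submatrix $M$ being updated to be invertible, whereas the submatrices of $A$ are generically singular. I would resolve this exactly as foreshadowed above: instead of storing data for $M$ itself I store it for the \emph{symbolically shifted} matrix $\xi I-M$, which is invertible over the field of rational functions in $\xi$. Applying Lemma~\ref{lem:detPolylogInFOLL} and parts~1 and~2 of Lemma~\ref{samir1} \textit{mutatis mutandis}, $\det(\xi I-M)$ is a degree-$O(\log n)$ polynomial in $\xi$ and the entries of $(\xi I-M)^{-1}$ are rational functions whose numerators and denominators have degree $O(\log n)$. Applying Equations~\ref{mdl-eqn} and~\ref{mil-eqn} with $\xi I-M$ in place of $M$, the desired quantity $\det(M+UV^T)$ is recovered as the constant term of $\det((M-\xi I)+UV^T)$, and the desired inverse $(M+UCV)^{-1}$, when it exists, is obtained by the analogous substitution; both extractions are in $\AC^0$ by Lemma~\ref{samir2}.

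Finally I would glue these pieces together with the continual-rebuilding (muddling) schedule. The remaining difficulty is that each batch swells the generating set by $O(\log n/\log\log n)$, so over a window of $\log\log n$ steps it would grow to size $O(\log n)$. To keep it bounded, at each time instant $i$ with current generating set $T\cup\hat{T}$ I would launch the $\FOLL$ algorithm of Lemma~\ref{preproc} to extract a subset $\tilde{T}$ of size at most $\log n$ with $\angle{\tilde{T}}=\angle{T\cup\hat{T}}$. This thread finishes at instant $i+\log\log n$; by then the $O(\log n)$ changes $T'$ that accumulated during the window have been folded into the submatrix data by the $\AC^0$ updates of the two preceding paragraphs, so the full Lemma~\ref{samir1} data for $\tilde{T}\cup T'$ (still of size $O(\log n)$) is in place to answer queries in $\AC^0$. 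Running $\log\log n$ such threads in parallel, each offset by one time instant, guarantees a freshly rebuilt structure at every instant and yields the claimed deterministic $\DynAC^0$ algorithm.
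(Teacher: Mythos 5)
Your proposal is correct and follows essentially the same route as the paper's own proof: the same linear-algebraic data structure (the Lemma~\ref{samir1} determinant/inverse data for $\tilde{A}=[A\mid p^mI_\ell]$ with membership via Lemma~\ref{av-lem}), the same rank-$t$ column updates handled by Lemmas~\ref{SMW}, \ref{mdet-lem} and \ref{samir2}, the same $\xi I-M$ device for circumventing singular submatrices, and the same $\log\log n$-window muddling schedule that recompresses the generating set with the $\FOLL$ algorithm of Lemma~\ref{preproc}. There are no gaps to report.
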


\section{Dynamic Abelian group isomorphism}\label{sec:AbIso}

Let $G_1$ and $G_2$ be abelian groups, each given a multiplication
table as input, say $T_1$ and $T_2$, respectively.  Let $S_1\subseteq
G_1$ and $S_2\subseteq G_2$ be subsets. In the static setting, there
is a simple polynomial time algorithm for checking if $\angle{S_1}$
and $\angle{S_2}$ are isomorphic: it suffices to list out the two
subgroups $\angle{S_1}$ and $\angle{S_2}$, check they have the same
order $n$, and check for each factor $k$ of $n$ that the number of
elements of order $k$ in the two subgroups $\angle{S_1}$ and
$\angle{S_2}$ is the same.\footnote{Two finite abelian groups are
  isomorphic iff for each positive integer $k$ the number of elements
  of order $k$ in the two groups coincide \cite{Hall}.}
  
We give a $\DynAC^0$ algorithm the \emph{dynamic version} of abelian
group isomorphism that supports insertions and deletions to both $S_1$
and $S_2$.

Let $n_1=\prod_{x\in S_1}o(x)$ and $n_2=\prod_{y\in S_2}o(y)$, where
the orders $o(x), o(y), x\in S_1, y\in S_2$ can be pre-computed for
the elements of the two groups $G_1$ and $G_2$. Let $n_1=\prod_{i=1}^r
p_i^{a_i}$ and $n_2=\prod_{i=1}^r p_i^{b_i}$ be their prime
factorizations. We can assume both $n_1$ and $n_2$ have the same prime
factors. Otherwise, $\angle{S_1}$ and $\angle{S_2}$ are not
isomorphic. Let $n_{1i}=n_1/p_i^{a_i}$ and $n_{2i}=n_2/p_i^{b_i}$ and
$S_{1i}=\{x^{n_{1i}}\mid x\in S_1\}$ and $S_{1i}=\{y^{n_{1i}}\mid y\in
S_2\}$ for $1\le i\le r$. Since a finite abelian group is a direct product
of its (unique) Sylow subgroups  we have

\begin{proposition}
 The groups $\angle{S_1}$ and $\angle{S_2}$ are isomorphic iff their
 $p_i$-Sylow subgroups $\angle{S_{1i}}$ and $\angle{S_{2i}}$ are
 isomorphic.
\end{proposition}

Thus, as argued in Sections~\ref{sec4} and \ref{sec4a}, it suffices to
solve the problem for abelian $p$-groups. Henceforth, we assume both
$\angle{S_1}$ and $\angle{S_2}$ are $p$-groups. The following lemma
from Mckenzie and Cook's work \cite{McC}, paraphrased in our context,
is useful for our algorithm.

\begin{lemma}{\rm\cite[Proposition 6.4]{McC}}\label{mckenzie}
  Let $\angle{S_1}\le G_1$ and $\angle{S_2}\le G_2$ be abelian
  $p$-groups, and $k$ be largest positive integer such that $p^k\le
  \max\{|G_1|,|G_2|\}$. For $1\le j\le k$ let $S_{1j}=\{x^{p^j}\mid
    x\in S_1\}$ and $S_{2j}=\{y^{p^j}\mid y\in S_2\}$. Then
    $\angle{S_1}$ and $\angle{S_2}$ are isomorphic if and only if
    $|\angle{S_{1j}}|=|\angle{S_{2j}}|$ for $1\le j\le k$.
\end{lemma}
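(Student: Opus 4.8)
The plan is to reduce everything to the structure theory of finite abelian $p$-groups. Both $\angle{S_1}$ and $\angle{S_2}$ are finite abelian $p$-groups, so by the structure theorem each decomposes as a direct sum of cyclic $p$-groups, $\angle{S_i}\cong\bigoplus_s \mathbb{Z}/p^{\lambda^{(i)}_{s}}$, with a type (partition) $\lambda^{(i)}=(\lambda^{(i)}_{1}\ge\lambda^{(i)}_{2}\ge\cdots)$ that is a \emph{complete} isomorphism invariant: $\angle{S_1}\cong\angle{S_2}$ iff $\lambda^{(1)}=\lambda^{(2)}$. The forward implication of the lemma is then immediate, since an isomorphism between the two groups carries $p^j$-th powers to $p^j$-th powers and so maps $\angle{S_{1j}}$ isomorphically onto $\angle{S_{2j}}$, preserving orders. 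The real content is therefore the reverse direction: equality of the listed orders must force equality of types.

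First I would identify the subgroups $\angle{S_{ij}}$ intrinsically. In an abelian group the map $x\mapsto x^{p^j}$ is an endomorphism whose image is the subgroup $(\angle{S_i})^{p^j}$ of all $p^j$-th powers; since $S_i$ generates $\angle{S_i}$, this image is generated by $\{x^{p^j}:x\in S_i\}=S_{ij}$, so $\angle{S_{ij}}=(\angle{S_i})^{p^j}$. Next I would record the order of such a power subgroup in terms of the type: because taking $p^j$-th powers respects direct sums and sends $\mathbb{Z}/p^{\lambda_s}$ onto a cyclic group of order $p^{\max(\lambda_s-j,0)}$, one obtains $\log_p|\angle{S_{ij}}|=\sum_s\max(\lambda^{(i)}_{s}-j,0)$.

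The crux is to invert this relation. Writing $e_i(j)=\log_p|\angle{S_{ij}}|=\sum_s\max(\lambda^{(i)}_{s}-j,0)$, a one-line computation gives the successive difference $e_i(j-1)-e_i(j)=\#\{s:\lambda^{(i)}_{s}\ge j\}=:N_j^{(i)}$, which is exactly the $j$-th entry of the conjugate partition of $\lambda^{(i)}$. Since the conjugate partition determines $\lambda^{(i)}$, the sequence $(e_i(j))_{j}$ determines the type, and so equality of the orders across the relevant range of $j$ forces $\lambda^{(1)}=\lambda^{(2)}$. For the range it suffices to let $j$ run up to $k$: since $p^{\lambda^{(i)}_{1}}\le|\angle{S_i}|\le|G_i|\le\max\{|G_1|,|G_2|\}$, every part is at most $k$, whence $e_i(k)=0$.

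The one point to watch — and the step I expect to be the genuine obstacle, rather than the routine computations above — concerns the $j=0$ endpoint. The differences available from the range $1\le j\le k$ only involve $e_i(1),\dots,e_i(k)$, so they pin down the conjugate-partition entries $N_2^{(i)},\dots,N_k^{(i)}$, hence all cyclic factors of order $p^2$ or larger, but they leave the number of factors of order exactly $p$ undetermined; that number is $N_1^{(i)}-N_2^{(i)}$ with $N_1^{(i)}=e_i(0)-e_i(1)$. It is therefore recovered precisely by comparing the total orders $e_i(0)=\log_p|\angle{S_i}|$, i.e.\ the $j=0$ instance of the very same quantity. This comparison is available at no extra cost and is in any case carried out by the surrounding isomorphism test, which closes the argument.
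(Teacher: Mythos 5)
The paper offers no proof of this lemma at all---it is imported, with attribution, from McKenzie and Cook \cite[Proposition 6.4]{McC}---so there is no internal argument to compare yours against; your proposal has to be judged on its own. On its own it is the standard and correct argument: $\angle{S_{ij}}$ equals the power subgroup $(\angle{S_i})^{p^j}$ because the $p^j$-power map is an endomorphism of an abelian group and images of generating sets generate images; the order formula $\log_p|\angle{S_{ij}}|=\sum_s\max(\lambda^{(i)}_s-j,0)$ is right; recovering the type from the successive differences (the conjugate partition) is right; and the bound $\lambda^{(i)}_1\le k$ correctly justifies truncating at $j=k$.

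However, your final ``point to watch'' should be promoted from a caveat to a finding: with $j$ ranging over $1\le j\le k$, as the lemma is literally stated, the statement is \emph{false}, so no proof of it can exist. Concretely, take $G_1=G_2=(\mathbb{Z}/p\mathbb{Z})^2$, $S_1=\{a\}$ and $S_2=\{a,b\}$ with $a,b$ generating $G_2$. Then every element of $S_{1j}$ and $S_{2j}$ with $j\ge 1$ is the identity, so $|\angle{S_{1j}}|=|\angle{S_{2j}}|=1$ for all $1\le j\le k=2$, yet $\angle{S_1}\cong\mathbb{Z}/p\mathbb{Z}$ is not isomorphic to $\angle{S_2}=(\mathbb{Z}/p\mathbb{Z})^2$. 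This is exactly the degree of freedom your difference computation isolates: the hypotheses for $j\ge 1$ fix the multiplicities of cyclic factors of order $p^2$ and higher, but say nothing about the number of factors of order exactly $p$. The lemma must be read with the range $0\le j\le k$, the $j=0$ instance being $|\angle{S_1}|=|\angle{S_2}|$ (presumably the form in McKenzie--Cook; the paper's transcription has an off-by-one error). Your closing appeal to ``the surrounding isomorphism test'' is the right algorithmic remedy---the dynamic isomorphism algorithm can indeed compare $|\angle{S_1}|$ and $|\angle{S_2}|$ at no extra cost---but it is not a proof of the lemma as written; the honest conclusion is that the statement's range needs correcting, and with that correction your proof is complete.
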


Effectively, the above lemma is a reduction from abelian group
isomorphism to abelian $\CGM$. Thus, as observed in \cite{CT12}, in
the static setting we can note that the above lemma immediately shows
that abelian group isomorphism problem we consider can be solved by
$\AC(\log\log n)$ circuits with majority gates. This is by applying
the Barrington et al algorithm \cite{BKLM} to enumerate the subgroups
$\angle{S_{1j}}$ and $\angle{S_{2j}}$ in $\AC(\log\log n)$ for each
$1\le j\le k$ and then comparing their orders (for which majority
gates are required).


Our strategy for the dynamic version is also based on
Lemma~\ref{mckenzie} because we can apply the results for abelian
$\CGM$ shown in Sections \ref{sec4} and \ref{sec4a}.

In the dynamic setting, where we have insertions and deletions to the
generating sets $S_1,S_2$, we will use the same data structures
developed in Section~\ref{sec4} (for supporting $\polylog(n)$
insertions and deletions) and Section~\ref{sec4a} (for supporting
$\log n/\log\log n$ insertions and deletions) for the abelian $\CGM$
problem but now in parallel for all the generating sets
$\angle{S_{1j}}$ and $\angle{S_{2j}}$ for $1\le j\le k$.

In order to compute $|\angle{S_{1j}}|$ and $|\angle{S_{2j}}|$ from
membership queries the following lemma, from \cite{McC}, is useful.

 \begin{lemma}{\rm\cite[Proposition 6.6]{McC}}
	Let $H=\angle{g_1,\ldots,g_r}$ be a finite abelian
        $p$-group. Then, $|H| = t_1 t_2\ldots t_r$ where $t_j$ is the
        least positive integer such that $g_j^{t_j} \in
        \angle{g_{j+1},\ldots,g_r}$ for $1\leq j \leq r$.
\end{lemma}     

 In the above lemma, as $H$ is a $p$-group notice that each $t_j$ is a
 power of $p$. We will be applying this lemma to groups
 $\angle{S_{1j}}$ and $\angle{S_{2j}}$. As $|\angle{S_{1j}}|\le n_1\le
 |G_1|$ and $|\angle{S_{2j}}|\le n_2\le |G_2|$, and both $n_1$ and
 $n_2$ are logarithmic size in binary, for these groups $H$ 
 at most logarithmically many of the integers $t_i$ are more than
 $1$. Letting $t_j=p^{r_j}$, computing the product $\prod_j
 t_j=p^{\sum_j r_j}$ amounts to adding at most logarithmically many
 $r_j$, each logarithmically bounded. As already observed, such tiny
 additions can be computed in $\AC^0$.

To summarize, we have shown the following.

\begin{theorem}
\begin{enumerate}
\item There is a randomized $\DynAC^0$ algorithm for abelian group
  isomorphism that supports $\polylog(n)$ insertions and deletions at
  each step to the generating sets of the two groups.
\item There is a deterministic $\DynAC^0$ algorithm for abelian group
  isomorphism that supports $\log n/\log\log n$ insertions and
  deletions at each step to the generating sets of the two groups.
\end{enumerate}
\end{theorem}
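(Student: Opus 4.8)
The plan is to prove the final theorem by reducing abelian group isomorphism to abelian $\CGM$ via Lemma~\ref{mckenzie}, and then running the dynamic $\CGM$ machinery (Theorem~\ref{thm1} for the randomized case, Theorem~\ref{thm2} for the deterministic case) in parallel over the derived generating sets. First I would note, using the Sylow decomposition proposition, that it suffices to handle the case where $\angle{S_1}$ and $\angle{S_2}$ are abelian $p$-groups; the reduction to Sylow subgroups is an $\AC^0$-computable preprocessing step that does not affect the number of dynamic updates, since each insertion/deletion of $x$ into $S_1$ corresponds to insertion/deletion of the single element $x^{n_{1i}}$ into each $S_{1i}$ (one update per Sylow component, done in parallel).

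Next I would set up the key observation from Lemma~\ref{mckenzie}: $\angle{S_1}\cong\angle{S_2}$ iff $|\angle{S_{1j}}|=|\angle{S_{2j}}|$ for all $1\le j\le k$, where $k=O(\log n)$ and $S_{1j}=\{x^{p^j}\mid x\in S_1\}$, $S_{2j}=\{y^{p^j}\mid y\in S_2\}$. The central point is that a single insertion or deletion to $S_1$ induces exactly one corresponding insertion or deletion to each of the $k$ sets $S_{1j}$ (the element $x$ being replaced by $x^{p^j}$), so $t$ bulk updates to $S_1$ become $t$ bulk updates to each $S_{1j}$ in parallel. Thus I would maintain, in parallel for every $j$, the dynamic $\CGM$ data structure of Section~\ref{sec4} (resp.\ Section~\ref{sec4a}), which lets us answer membership queries and hence compute each order $|\angle{S_{1j}}|$.

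To actually extract the orders I would invoke Mckenzie--Cook's Proposition~6.6: writing $\angle{S_{ij}}=\angle{g_1,\dots,g_r}$ with an independent generating set, $|\angle{S_{ij}}|=\prod_j t_j$ where $t_j=p^{r_j}$ is the least power with $g_j^{t_j}\in\angle{g_{j+1},\dots,g_r}$. Each $t_j$ is found by a binary-search-free parallel sweep over the at-most-logarithmically-many candidate exponents using membership queries (each answerable in $\AC^0$ by the maintained structure), and since at most $O(\log n)$ of the $r_j$ are nonzero and each is $O(\log n)$ bounded, summing the exponents $\sum_j r_j$ is a tiny addition of logarithmically many logarithmically-bounded unary numbers, computable in $\AC^0$. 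Comparing the two resulting orders for each $j$, and taking the AND over all $j$, is likewise $\AC^0$.

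The main obstacle, and the place requiring the most care, is the continual-rebuilding bookkeeping: each of the $k=O(\log n)$ parallel dynamic structures must independently perform the muddling/rebuild of Theorem~\ref{thm1} (resp.\ Theorem~\ref{thm2}), and I must check that running $k$ such structures simultaneously still fits within polynomially many processors and $O(1)$ query time. This is fine because each structure uses $\poly(n)$ resources and $k\le\log n$, so the total remains polynomial; the update budgets also match, since $\polylog(n)$ (resp.\ $O(\log n/\log\log n)$) updates to $S_i$ translate to the same number of updates per $S_{ij}$. In the deterministic case I must additionally confirm that the linear-algebraic data (submatrix determinants and inverses from Lemma~\ref{samir1}, updated via Lemmas~\ref{SMW} and~\ref{mdet-lem}) is maintained separately for each $j$ without interference, which follows since the structures are genuinely independent. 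Assembling these pieces yields both parts of the theorem.
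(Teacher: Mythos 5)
Your proposal is correct and follows essentially the same route as the paper: reduce to abelian $p$-groups via the Sylow decomposition, apply the McKenzie--Cook criterion (Lemma~\ref{mckenzie}) to turn isomorphism into order comparisons of the subgroups $\angle{S_{1j}}$, $\angle{S_{2j}}$, maintain the dynamic $\CGM$ structures of Sections~\ref{sec4} and~\ref{sec4a} in parallel for all $j$, and compute the orders via McKenzie--Cook's product formula with the tiny $\AC^0$ additions of exponents. Your extra bookkeeping (one induced update per $S_{ij}$ per update to $S_i$, and the polynomial resource count over $k=O(\log n)$ parallel structures) is implicit in the paper's argument and does not change the approach.
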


\section{Making the multiplication table dynamic}\label{sec6}

We have assumed so far that the overall group $G$ (or monoid) is
unchanged and only the generating set for the $\CGM$ problem is
dynamic. Suppose now that the entries of the multiplication table of
$G$ can be modified dynamically. When the table's entries change, it
may no longer represent a group (or a monoid). The binary operation
$*: G\times G\to G$ is just a \emph{magma}, in general. However, we
can show that the dynamic algorithms for abelian $\CGM$ still hold,
with the proviso that the membership query answers are correct only
when the magma is actually an abelian group.

The main property we use here is that at most one group has its
multiplication table within linear (i.e.\ $O(n)$) edit distance from
the multiplication table of an $n$-element magma $G$.\footnote{By the
  edit distance between multiplication tables $op_1:G\times G\to G$
  and $op_2:G\times G\to G$ we mean the number of pairs $(a,b)\in
  G\times G$ such that $op_1(a,b)\ne op_2(a,b)$.} Moreover, from the
magma multiplication table we can \emph{decode} this unique group in
$\AC^0$. We note that Erg\"{u}n et al \cite{EKKRV} have shown stronger
results for this problem in the context of spot checkers; they give
randomized self-correction algorithms for a variety of problems.
However, for a self-contained presentation, we include a simple proof
of a weaker statement that suffices for our purpose.

\begin{lemma}\label{decode}
Let $M_G$ denote the multiplication table of the group $G$. Suppose
$M$ is a multiplication table obtained from $M_G$ by changing at most
$\delta n$ many entries of $M_G$, for $\delta < 1/13$. Then there is
an $\AC^0$ circuit that takes $M$ as input and outputs $M_G$.
\end{lemma}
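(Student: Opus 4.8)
The plan is to reconstruct $M_G$ cell by cell by a \emph{self-correction} (majority vote) over all possible ``routings'' of a product through a third element, exploiting that $M$ and $M_G$ agree on all but a $\delta$-fraction of every row and of every ``anti-diagonal''. Fix a target pair $(a,b)$ and suppose I want to output $a\cdot_G b$. In the group we have the identity $a\cdot_G b=(a\cdot_G c)\cdot_G(c^{-1}\cdot_G b)$ for \emph{every} $c\in G$. Crucially this can be evaluated using only lookups in $M$, and without knowing the identity or inverses: given $c$, set $p=M(a,c)$, let $y_c$ be the unique element with $M(c,y_c)=b$ (found by scanning row $c$), and define the vote $z_c=M(p,y_c)$. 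If $c$ is \emph{good} --- meaning the entry $(a,c)$ is correct, row $c$ is error-free (so that $y_c=c^{-1}\cdot_G b$), and the entry $(ac,\,c^{-1}b)$ is correct --- then $z_c=a\cdot_G b$.

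Next I would bound the number of bad $c$. Writing $E\le\delta n$ for the number of corrupted entries: the $c$ with $(a,c)$ wrong are at most the errors in row $a$, so $\le E$; the $c$ with row $c$ corrupted are at most the number of rows containing an error, so $\le E$; and the $c$ with $(ac,\,c^{-1}b)$ wrong are in bijection with the corrupted entries on the set $\{(x,y):x\cdot_G y=a\cdot_G b\}$ (via $c\mapsto(ac,\,c^{-1}b)$, which is injective since $ac$ determines $c$), so again $\le E$. Thus at most $3\delta n$ values of $c$ are bad, the true product collects at least $(1-3\delta)n$ votes, and any \emph{wrong} value $z$ is voted only by bad $c$ and so collects fewer than $3\delta n$ votes. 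Since $\delta<1/13$ gives $1-3\delta>10/13>2/3$ and $3\delta<3/13<1/3$, the correct product $a\cdot_G b$ is the unique value receiving more than a $2/3$-fraction of the $n$ votes, with no value landing in the ambiguous middle range.

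It remains to realize the decoder in $\AC^0$. For fixed $(a,b)$ and each candidate $z\in G$, the predicate ``$z_c=z$'' is $\exists y\,[\,M(c,y)=b\ \wedge\ (\forall y'\ne y:\,M(c,y')\ne b)\ \wedge\ M(M(a,c),y)=z\,]$, a constant-depth, polynomial-fan-in formula, hence $\AC^0$; a bad $c$ whose row fails the uniqueness test simply abstains and pollutes no count. The aggregation is then a \emph{constant-gap approximate majority} over the $n$ Boolean votes (decide weight $\ge 2n/3$ versus weight $\le n/3$), which is computable in uniform $\AC^0$ \cite{Ajtai}; by the vote gap just established, exactly one $z$ passes, and for each output cell we emit that $z$ as an OR over the passing candidate. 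Running this in parallel over all $(a,b)$ yields $M_G$, and the circuit is clearly dlogtime uniform.

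The main obstacle is precisely this aggregation: a plain majority vote is not in $\AC^0$, so the whole argument hinges on engineering a \emph{constant} multiplicative gap between the vote count of the correct product and that of every competitor. This is exactly what a small error fraction buys: any $\delta\le 1/9$ already guarantees the $2/3$-vs-$1/3$ promise, so the hypothesis $\delta<1/13$ leaves comfortable room to invoke the approximate majority of \cite{Ajtai}. A secondary point to get right is that the per-$c$ decode predicate must fold the row-search for $y_c$ together with its uniqueness test, so that corrupted rows cannot inject spurious votes for wrong values; once that is in place, the routing identity for groups forces every good $c$ to agree on $a\cdot_G b$, and the counting bounds above do the rest.
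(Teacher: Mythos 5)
Your proof is correct and follows essentially the same strategy as the paper's: re-route the product $ab$ through auxiliary elements via the group identity $ab=(ac)(c^{-1}b)$, bound the number of corrupted routings by a small constant multiple of $\delta n$, and extract the unique winner with Ajtai-style $\AC^0$ approximate majority over a constant vote gap. The only real difference is one of streamlining: the paper first decodes the identity element by a majority test and then votes over the set $S$ of approximate inverse pairs $(z,w)$ with $z*w=e$ (arriving at a $(1-6\delta)$ good fraction), whereas you locate $c^{-1}b$ directly by a row-scan with a uniqueness test, which skips the identity/inverse decoding and yields the slightly better bad-vote bound of $3\delta n$.
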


\begin{proof}
  For each $x_i\in G$ the row of $x_i$ in $M$ has at most $\delta n$
  errors in it. Thus, the row for the identity element, say $x_1=e$ is
  uniquely determined, because $x_1\in G$ is the unique element with
  $x_1x_j=x_j$ for majority of $j\in [n]$.

  For each $z\in G$ there is a unique inverse $z^{-1}\in G$ and
  $zz^{-1}=e=z^{-1}z$. That means in $M_G$ there are exactly $n$
  occurrences of $e$ in the multiplication table. Therefore, in $M$
  there are at most $(1+\delta)n$ occurrences of $e$ and at least
  $(1-\delta)n$ occurrences of $e$.

  Let $S=\{(z,w)\mid z,w\in G, z*w=e\}$, where $*$ is the product
  operation in the table $M$.  Then
  \[
    (1-\delta)n\le |S| \le (1+\delta)n.
  \]

  Thus, for at least $(1-2\delta)n$ pairs $(z,w)\in S$ we have
  $z*w=zw=e$ in $G$.

  Now, in order to recover the correct value of the product $x_ix_j$,
  we look up the products $(x_i*z)*(w*x_j)$ in the table $M$. Then
  we have
\begin{itemize}
\item $|\{z\in G\mid x_i*z\ne x_iz\}| \le \delta n$.
\item $|\{w\in G\mid w*x_j\ne wx_j\}| \le \delta n$.
\item $|\{(z,w)\in S\mid z*w\ne zw\}| \le \delta n$.
\item $|\{(z,w)\in S\mid (x_iz)*(wx_j)\ne (x_iz)(wx_j)\}| \le \delta n$.  
\end{itemize}

Thus, for at least $(1-6\delta)n$ pairs $(z,w)\in S$ we have
$(x_i*z)*(w*x_j)=(x_iz)(wx_j)=x_i(zw)x_j=x_ix_j$.

If we choose $\delta<1/13$ then the number of such pairs $(z,w)\in S$
is more than $7n/13$. By approximate majority which can be computed
in $\AC^0$ \cite{Ajtai,Ajtai2}, we can find this correct value of $x_ix_j$. We can thus
recover the entire table $M_G$ in $\AC^0$.  
\end{proof}

The above lemma suggests the following simple dynamic algorithm for
the abelian $\CGM$ problem that supports $\polylog(n)$ changes to the
group multiplication table, as well as bulk insertions/deletions to
the generating set (as discussed in Sections \ref{sec4} and
\ref{sec4a}).

\begin{enumerate}
\item Let $M$ be the current multiplication table. We apply the
  $\AC^0$ algorithm of Lemma~\ref{decode} to decode $M$. Let $G$ be
  the resulting table.



\item If the decoded table does not give an abelian group, the query
  answers can be arbitrary (but consistent which can be ensured by
  remembering the answers to queries already made).

\item Suppose the decoded table gives an abelian group $G'$. If $G'\ne
  G$ then for the next $\log n$ steps we rebuild the static data
  structure for $G'$ in $\AC(\log n)$. We can answer any membership
  queries, occurring in this window of $\log n$ time steps,
  arbitrarily. After $\log n$ steps we can replace $G$ with $G'$ and
  its data structure and continue.
\end{enumerate}  

\begin{comment}
\begin{enumerate}
\item Let $M$ be the current multiplication table. The algorithm also
  keeps the multiplication table of the previous abelian group $G$ for
  which we keep the data structures as described in Section~\ref{sec4}
  (or the one in Section~\ref{sec4a}).
\item We apply the $\AC^0$ algorithm of Lemma~\ref{decode} to decode
  $M$. Let $G$ be the resulting table.
\item If the decoded table gives the same abelian group $G$ then we
  process the additions/deletions to the generating set $S$ and
  membership query as per the algorithm in Section~\ref{sec4} or
  \ref{sec4a}.
\item If the decoded table does not give an abelian group, the query
  answers can be w.r.t. $G$ (which are arbitrary in a sense).
\item if the decoded table gives a new abelian group $G'$ then for the
  next $\log n$ steps we rebuild the static data structure for $G'$ in
  $\AC(\log n)$.  We can answer any membership queries, occurring in
  this window of $\log n$ time steps, arbitrarily. After $\log n$
  steps we can replace $G$ with $G'$ and its data structure and
  continue.
\end{enumerate}
\end{comment}

In summary, we have the following.

\begin{theorem}
  There is a randomized $\DynAC^0$ algorithm that supports $O(n/\log
  n)$ changes to the multiplication table and $\polylog(n)$
  insertions/deletions to the generating set, with the proviso that
  when the multiplication table decodes to an abelian group the
  membership queries are answered with respect to it, and when it does
  not decode to an abelian group then the query answers could be
  incorrect.  There is also a deterministic $\DynAC^0$ algorithm that
  supports $O(n/\log n)$ changes to the multiplication table and $\log
  n/\log\log n$ insertions/deletions to the generating set, with the
  same proviso as described above.
\end{theorem}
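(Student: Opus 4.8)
The plan is to run the $\AC^0$ table-decoder of Lemma~\ref{decode} as a fresh, one-shot preprocessing filter at every time instant, and feed its output into the dynamic $\CGM$ machinery of Theorems~\ref{thm1} and~\ref{thm2}, using muddling-style rebuilding to absorb the cost of ever switching the underlying group. Concretely, at each step I would first apply the decoder to the current multiplication table $M$; whenever $M$ lies within edit distance $\delta n$ (for $\delta<1/13$) of a genuine abelian group, the decoder returns the \emph{unique} such group $G'$ correctly and in constant parallel time. When the decoder does not return an abelian group I would answer membership queries arbitrarily but consistently (remembering previously issued answers), which is exactly the escape clause promised by the stated proviso. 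Because the decoder is itself $\AC^0$ and history-independent, no rebuilding thread is needed for decoding itself; the only object that must be rebuilt over a window is the group-dependent data structure.

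The delicate quantitative point is to reconcile the per-step edit budget of $O(n/\log n)$ with the decoding threshold $\delta n$. I would fix the hidden constant in $O(n/\log n)$ small enough that, over any rebuild window of $\log n$ consecutive steps, the cumulative number of table edits stays strictly below $\delta n$: since $(\log n)\cdot O(n/\log n)=O(n)$, choosing the constant below $1/13$ keeps the decoded group \emph{stable} across a single $\log n$ window. This stability is what makes the muddling go through: the moment the decoder reports a group $G'$ different from the one for which the algorithm currently holds a data structure, I would spend the next $\log n$ steps rebuilding the static structure of Section~\ref{sec4} (randomized case) or Section~\ref{sec4a} (deterministic case) for $G'$ in $\AC(\log n)$ — recomputing the powers table, the independent generating set, and, in the deterministic case, the submatrix determinants and inverses — while answering queries arbitrarily during the window, and only then swapping $G'$ and its structure into place. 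Once a correct structure for the current group is in place, membership queries and the $\polylog(n)$ (resp.\ $O(\log n/\log\log n)$) generating-set insertions/deletions are handled verbatim by Theorem~\ref{thm1} (resp.\ Theorem~\ref{thm2}) in $\AC^0$.

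The main obstacle I anticipate is the bookkeeping at the interface of the two independent rebuilds, since the generating-set data structure built for the old group $G$ is entirely meaningless for $G'$: changing the group invalidates the powers table, the independent generating set, and all submatrix data on which the $\CGM$ algorithm depends. I would handle this by restarting the generating-set rebuild from scratch whenever the group switches, synchronizing it with the group rebuild so that at the instant $G'$ is installed its associated $\CGM$ structure is simultaneously valid; the stability guaranteed by the $O(n/\log n)$ budget ensures at most one group switch can be pending within any window, so a constant number of parallel rebuilding threads suffices and the overall per-step work stays in $\DynAC^0$. Verifying that this budget genuinely forbids a second switch from being triggered mid-window — i.e.\ that the decoded group cannot drift during the rebuild — is the one step that needs to be argued carefully rather than taken for granted.
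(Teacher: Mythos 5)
Your proposal is correct and follows essentially the same route as the paper: the paper likewise runs the decoder of Lemma~\ref{decode} on the current table at each step, answers queries arbitrarily but consistently when the result is not an abelian group, and, upon decoding a group $G'$ different from the current one, spends a $\log n$-step window rebuilding the static $\CGM$ data structure for $G'$ in $\AC(\log n)$ (answering queries arbitrarily in that window) before swapping it in. Your explicit reconciliation of the $O(n/\log n)$ per-step edit budget with the $\delta n$ decoding radius, and the resulting stability of the decoded group across a rebuild window, is a point the paper leaves implicit rather than arguing.
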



\begin{thebibliography}{10}
\providecommand{\url}[1]{\texttt{#1}}
\providecommand{\urlprefix}{URL }
\providecommand{\doi}[1]{https://doi.org/#1}

\bibitem{Ajtai}
Ajtai, M.: {\(\sum\)}\({}^{\mbox{1}}\)\({}_{\mbox{1}}\)-formulae on finite
  structures. Ann. Pure Appl. Log.  \textbf{24}(1),  1--48 (1983)

\bibitem{Ajtai2}
Ajtai, M.: Approximate counting with uniform constant-depth circuits. In:
  Advances In Computational Complexity Theory, Proceedings of a {DIMACS}
  Workshop, New Jersey, USA, December 3-7, 1990. pp. 1--20 (1990)

\bibitem{AV09}
Arvind, V., Vijayaraghavan, T.C.: Classifying problems on linear congruences
  and abelian permutation groups using logspace counting classes. Comput.
  Complex.  \textbf{19}(1),  57--98 (2010)

\bibitem{BLS}
Babai, L., Luks, E.M., Seress, {\'{A}}.: Permutation groups in {NC}. In: Aho,
  A.V. (ed.) Proceedings of the 19th Annual {ACM} Symposium on Theory of
  Computing, 1987, New York, New York, {USA}. pp. 409--420. {ACM} (1987)

\bibitem{BIS}
Barrington, D.A.M., Immerman, N., Straubing, H.: On uniformity within
  nc\({}^{\mbox{1}}\). In: Proceedings: Third Annual Structure in Complexity
  Theory Conference, Georgetown University, Washington, D. C., USA, June 14-17,
  1988. pp. 47--59 (1988)

\bibitem{BKLM}
Barrington, D.M., Kadau, P., Lange, K.J., McKenzie, P.: On the complexity of
  some problems on groups input as multiplication tables. Journal of Computer
  and System Sciences  \textbf{63}(2),  186--200 (2001)

\bibitem{CT12}
Chattopadhyay, A., Tor{\'{a}}n, J., Wagner, F.: Graph isomorphism is not
  ac0-reducible to group isomorphism. {ACM} Trans. Comput. Theory
  \textbf{5}(4),  13:1--13:13 (2013)

\bibitem{DKMSZ18}
Datta, S., Kulkarni, R., Mukherjee, A., Schwentick, T., Zeume, T.: Reachability
  is in dynfo. J. {ACM}  \textbf{65}(5),  33:1--33:24 (2018)

\bibitem{DMSVZ19}
Datta, S., Mukherjee, A., Schwentick, T., Vortmeier, N., Zeume, T.: A strategy
  for dynamic programs: Start over and muddle through. Log. Methods Comput.
  Sci.  \textbf{15}(2) (2019)

\bibitem{DMVZ18}
Datta, S., Mukherjee, A., Vortmeier, N., Zeume, T.: Reachability and distances
  under multiple changes. In: 45th International Colloquium on Automata,
  Languages, and Programming, {ICALP} 2018, July 9-13, 2018, Prague, Czech
  Republic. pp. 120:1--120:14 (2018)

\bibitem{DDKPSS}
Dhulipala, L., Durfee, D., Kulkarni, J., Peng, R., Sawlani, S., Sun, X.:
  Parallel batch-dynamic graphs: Algorithms and lower bounds. In: Proceedings
  of the 2020 {ACM-SIAM} Symposium on Discrete Algorithms, {SODA} 2020, Salt
  Lake City, UT, USA, January 5-8, 2020. pp. 1300--1319 (2020)

\bibitem{DongST95}
Dong, G., Su, J., Topor, R.W.: Nonrecursive incremental evaluation of datalog
  queries. Ann. Math. Artif. Intell.  \textbf{14}(2-4),  187--223 (1995)

\bibitem{EKKRV}
Erg{\"{u}}n, F., Kannan, S., Kumar, R., Rubinfeld, R., Viswanathan, M.:
  Spot-checkers. J. Comput. Syst. Sci.  \textbf{60}(3),  717--751 (2000)

\bibitem{GQ21b}
Grochow, J.A., Qiao, Y.: On p-group isomorphism: Search-to-decision,
  counting-to-decision, and nilpotency class reductions via tensors. In:
  Kabanets, V. (ed.) 36th Computational Complexity Conference, {CCC} 2021, July
  20-23, 2021, Toronto, Ontario, Canada (Virtual Conference). LIPIcs, vol.~200,
  pp. 16:1--16:38. Schloss Dagstuhl - Leibniz-Zentrum f{\"{u}}r Informatik
  (2021)

\bibitem{GQ21a}
Grochow, J.A., Qiao, Y.: On the complexity of isomorphism problems for tensors,
  groups, and polynomials {I:} tensor isomorphism-completeness. In: Lee, J.R.
  (ed.) 12th Innovations in Theoretical Computer Science Conference, {ITCS}
  2021, January 6-8, 2021, Virtual Conference. LIPIcs, vol.~185, pp.
  31:1--31:19. Schloss Dagstuhl - Leibniz-Zentrum f{\"{u}}r Informatik (2021)

\bibitem{Hall}
Hall, M.: The Theory of Groups. AMS Chelsea Publishing Series, AMS Chelsea Pub.
  (1999)

\bibitem{HHS}
Hanauer, K., Henzinger, M., Schulz, C.: Recent advances in fully dynamic graph
  algorithms - {A} quick reference guide. {ACM} J. Exp. Algorithmics
  \textbf{27},  1.11:1--1.11:45 (2022)

\bibitem{HeSe81}
Henderson, H., Searle, S.: On deriving the inverse of a sum of matrices. {SIAM}
  Review  \textbf{23}(1),  53--60 (1981)

\bibitem{HAB}
Hesse, W., Allender, E., Barrington, D.A.M.: Uniform constant-depth threshold
  circuits for division and iterated multiplication. J. Comput. Syst. Sci.
  \textbf{65}(4),  695--716 (2002)

\bibitem{Immerman}
Immerman, N.: Descriptive complexity. Graduate texts in computer science,
  Springer (1999)

\bibitem{ILMP}
Italiano, G.F., Lattanzi, S., Mirrokni, V.S., Parotsidis, N.: Dynamic
  algorithms for the massively parallel computation model. In: The 31st {ACM}
  on Symposium on Parallelism in Algorithms and Architectures, {SPAA} 2019,
  Phoenix, AZ, USA, June 22-24, 2019. pp. 49--58 (2019)

\bibitem{MV}
Mahajan, M., Vinay, V.: Determinant: Combinatorics, algorithms, and complexity.
  Chic. J. Theor. Comput. Sci.  \textbf{1997} (1997)

\bibitem{McC}
McKenzie, P., Cook, S.A.: The parallel complexity of abelian permutation group
  problems. {SIAM} J. Comput.  \textbf{16}(5),  880--909 (1987)

\bibitem{NO}
Nowicki, K., Onak, K.: Dynamic graph algorithms with batch updates in the
  massively parallel computation model. In: Proceedings of the 2021 {ACM-SIAM}
  Symposium on Discrete Algorithms, {SODA} 2021, Virtual Conference, January 10
  - 13, 2021. pp. 2939--2958 (2021)

\bibitem{PatnaikI}
Patnaik, S., Immerman, N.: Dyn-fo: {A} parallel, dynamic complexity class. J.
  Comput. Syst. Sci.  \textbf{55}(2),  199--209 (1997)

\bibitem{Rei04}
Reingold, O.: Undirected connectivity in log-space. J. {ACM}  \textbf{55}(4),
  17:1--17:24 (2008)

\bibitem{Seressbook}
Seress, {\'A}.: Permutation Group Algorithms. Cambridge Tracts in Mathematics,
  Cambridge University Press (2003)

\bibitem{Sims}
Sims, C.C.: Computation with permutation groups. In: Petrick, S.R., Sammet,
  J.E., Tobey, R.G., Moses, J. (eds.) Proceedings of the second {ACM} symposium
  on Symbolic and algebraic manipulation, {SYMSAC} 1971, Los Angeles,
  California, USA, March 23-25, 1971. pp. 23--28. {ACM} (1971)

\bibitem{Sun}
Sun, X.: Faster isomorphism for p-groups of class 2 and exponent p. CoRR (Also
  in proceedings of STOC 2023, to appear)  \textbf{abs/2303.15412} (2023)

\bibitem{Venk}
Venkateswaran, H.: Circuit definitions of nondeterministic complexity classes.
  {SIAM} J. Comput.  \textbf{21}(4),  655--670 (1992)

\end{thebibliography}

\end{document}